\newcommand{\tnote}[1]{{\color{red} \footnotesize(Thomas: #1)}}
\newcommand{\bra}[1]{\left< #1\right|}
\newcommand{\ket}[1]{\left| #1\right>}
\newcommand{\Id}{\mathbbm{1}}
\newcommand{\mZ}{\mathbbm{Z}}
\newcommand{\ignore}[1]{}
\newtheorem{thm}{Theorem}[section]
\newtheorem{deff}[thm]{Definition}
\newtheorem{protocol}[thm]{Protocol}
\newtheorem{claim}[thm]{Claim}
\newtheorem{lem}[thm]{Lemma}
\def\hpic #1 #2 {\mbox{$\begin{array}[c]{l}
      \epsfig{file=#1,height=#2} \end{array}$}}
\def\vpic #1 #2 {\mbox{$\begin{array}[c]{l}
      \epsfig{file=#1,width=#2} \end{array}$}}
\DeclareMathOperator{\negl}{negl}
\newcommand{\qlwe}{\mathrm{QLWE}}
\newcommand{\lwe}{\mathrm{LWE}}
\newcommand{\poly}{\mathrm{poly}}
\newcommand{\vc}[1]{\mathbf{{#1}}}
\newcommand{\abs}[1]{\left\vert {#1} \right\vert}
\def\*#1{\mathbf{#1}}
\newcommand{\bbZ}{\mathbb{Z}}
\newcommand{\bbN}{\mathbb{N}}
\newcommand{\GenTrap}{\textsc{GenTrap}}
\newcommand{\Invert}{\textsc{Invert}}
\newcommand{\lossy}{\textsc{lossy}}
\newcommand{\Prov}{\mathds{P}}
\def\*#1{\mathbf{#1}}
\newcommand{\R}{\mathbb{R}}
\newcommand{\Z}{\mathbb{Z}}
\newcommand{\Es}[1]{\textsc{E}_{#1}}
\newcommand{\eps}{\varepsilon}
\begin{document}

\author{ Urmila Mahadev\thanks{California Institute of Technology, USA. Email: \texttt{mahadev@caltech.edu}} \and Umesh Vazirani\thanks{UC Berkeley, USA. Email: \texttt{vazirani@cs.berkeley.edu}} \and Thomas Vidick\thanks{California Institute of Technology, USA. Email: \texttt{vidick@caltech.edu}}}

\title{Efficient Certifiable Randomness from a Single Quantum Device}

\maketitle

\begin{abstract}
Brakerski et.\ al~\cite{oneproverrandomness} introduced the model of cryptographic testing of a single untrusted quantum device and gave a protocol 
for certifiable randomness generation. We use the leakage resilience properties of the Learning With Errors problem to address a key issue left open in previous work --- the rate of generation of randomness. Our new protocol can certify $\Omega(n)$ fresh bits of randomness in constant  
rounds, where $n$ is a parameter of the protocol and the total communication is $O(n)$, thus achieving a nearly optimal rate. The proof that the output is statistically random is conceptually simple and technically elementary.
\end{abstract}

 \section{Introduction}
 
 The testing of quantum devices poses fundamental difficulties due to their computational power and 
 restrictions imposed by the laws of quantum mechanics on access to their internal state. \cite{oneproverrandomness} introduced a model in which a purely classical verifier uses 
 cryptography to enhance its interactions with a single untrusted and polynomial-time bounded quantum computer. 
 In this model, they gave a protocol for generating certifiably random $n$-bit strings through $\Theta(n)$ iterations
 of a basic ``qubit certification test.'' 
 The correctness of the qubit certification test is based on a cryptographic primitive called a Noisy Trapdoor Claw-free functions (NTCFs) 
 with an adaptive hardcore bit. This primitive in turn was implemented based on the (post-quantum) hardness of the Learning with Errors problem (LWE). The qubit certification test has since been used for other tasks including quantum homomorphic encryption~\cite{mahadev2020classical}, quantum money~\cite{radian2019semi}, and delegated quantum computation~\cite{oneprover}, and has led to many follow-up works and extensions~\cite{chia2020classical,alagic2020non,brakerski2020simpler,hirahara2021test,gheorghiu2022quantum}. 
 
Recall that the qubit certification test from~\cite{oneproverrandomness} allows a classical verifier to reach into a quantum computer's Hilbert space and enforce a qubit structure in it, by ensuring that the untrusted quantum computer creates a $\ket{+}$ state and measures it in the computational basis, thereby generating $1$ bit of randomness. While the analysis of \cite{oneproverrandomness} is powerful, in that it uses cryptography to constrain the behavior of a quantum machine, it leaves a large gap between intuition and what is possible to prove. In particular,~\cite{oneproverrandomness} conjectured that in order to pass the protocol a prover must generate roughly $n$ bits of randomness per round, while the analysis could only guarantee a fraction of a bit.
 
 In this paper we show that a small change to the qubit certification test results in a protocol which  consists of two rounds of interaction, each communicating $\tilde{\Theta}(n)$ classical bits, and is such that any prover that succeeds with high enough constant probability in the protocol generates bits that are statistically close to a distribution with $\Theta(n)$ bits of min-entropy, even conditioned on the verifier's random questions. This conclusion holds as long as the prover is quantum polynomial-time bounded and the (post-quantum) Learning with Errors (QLWE) assumption holds for the duration of the protocol. We refer to Protocol~\ref{prot:amplifyrandomness} for a description of the protocol and Theorem~\ref{thm:43-randomness} for the formal guarantees. The small change that we make, which is described below, allows us to greatly simplify the proof of correctness while achieving a nearly optimal rate of randomness generation as a function of the total number of bits sent from the quantum device to the verifier, thereby also providing a way of certifying that the quantum device has roughly $n$ bits of quantum memory. We note that a quantum dimension test was recently obtained in~ \cite{fu2022computational} by performing parallel repetition of (a variant of) the protocol from~\cite{oneproverrandomness}. In contrast, our protocol is  more efficient (constant rounds, linear communication) and our analysis far more direct.\footnote{However,~\cite{fu2022computational}, see also~\cite{gheorghiu2022quantum}, proves a stronger rigidity result for their protocol, which extends the qubit test to certain $n$-qubit states.} 
 
 \medskip

 Beyond these quantitative improvements, our new protocol provides the first successful attempt to 
 move beyond NCTFs as a tool to help a classical verifier enforce structure in the quantum computer's Hilbert space. The main new ingredient in the improved protocol is a powerful notion from classical cryptography, called leakage resilience, which helps to more fully characterize the quantum computer.

Our point of departure is that whereas NTCFs are essentially $2$-to-$1$ functions, here  we consider cryptographic primitives that look like $k$-to-$1$ functions for exponentially large $k$, with the goal of certifying $\log k$ bits of randomness, rather than just $1$. More concretely, the $k$-to-$1$ function arises naturally from the use of leakage resilience techniques developed for the use of LWE in cryptography, and in particular the study of side channel attacks~\cite{goldwasser2010robustness}. The main idea to establish leakage resilience of LWE consists in replacing the uniformly random LWE matrix $A$  with a computationally indistinguishable lossy matrix $\tilde A = BC+E$ where $C$ has a large kernel, and $E$ is ``small''. Thus, up to the matrix $E$, which is needed to guarantee computational indistinguishability, $\tilde{A}$ is a close to a low-rank matrix---hence the terminology ``lossy''. 

To see how this would work let's start with a small change to the qubit certification protocol by switching from the use of an LWE matrix $A$ to a lossy matrix $\tilde A$. The reader familiar with the qubit certification protocol will see how at first glance this promises many more random bits. This is because, informally, in the protocol the quantum prover is tasked with preparing a uniform superposition $\frac{1}{\sqrt{2}}(\ket{x_0}+\ket{x_1})$ over two vectors $x_0$ and $x_1$ such that $Ax_0$ and $A(x_1-s)$ are sufficiently close, with $s$ a secret that is provided to the prover in encoded form. If the matrix $A$ is lossy then intuitively $\ket{x_0}$ would be replaced by $\frac{1}{\sqrt{k}}\sum_{i} |x_{0,i}\rangle$ where $x_{0,i}$ ranges over all preimages of $Ax_0$, and similarly for $\ket{x_1}$. Thus, by asking such a prover to measure its superposition in the computational basis one may hope to obtain approximately $ \log k$ bits of randomness as opposed to $1$. 

To keep the prover honest, the qubit certification protocol performs an equation test consisting of a measurement in the Hadamard basis --- the crux of protocol lies in the fact that passing this test implies that the prover must actually have a superposition over both $x_0$ and $x_1 = x_0 - s$. Unfortunately, the use of a lossy matrix and the appearance of multiple preimages completely breaks the structure of these tests, and it is unclear how to even test the prover's behavior for the case when $k$-to-$1$ functions are used. The actual protocol we analyze has the following structure: in the test round it uses a $2$-to-$1$ function, while in the randomness generation round it uses a $k$-to-$1$ function. To do this the verifier sends the prover an LWE sample $As +e$ in the test round and a lossy sample $\tilde{A}s +E$ in the generation round. From the viewpoint of the prover the two samples are indistinguishable, and one might hope that this keeps the prover honest in the generation round. Armed with this intuition, the challenge lies in showing that this simple test is sufficient to ensure that an untrusted prover is forced to output a large amount of min-entropy in the generation round.   

Unfortunately this idea faces an immediate difficulty: the tests used in the qubit certification protocol require the verifier to have knowledge of a secret trapdoor that allows inversion of the LWE matrix $A$. The use of this trapdoor rules out a direct kind of hybrid argument where the LWE matrix would be replaced by an indistinguishable lossy matrix while keeping the prover's success probability in the protocol unchanged. Our main idea to overcome this is to introduce a hypothetical \emph{quantum} verifier who is able to perform a meaningful test on the quantum prover \emph{without} making use of the trapdoor. This hypothetical verifier is used as an intermediate tool in the analysis, but the final protocol remains a classical verifier protocol. We explain these ideas in more detail in the next section, where we also precisely introduce the qubit  certification test to ground the discussion.

NTCFs have proved to be a powerful tool in constraining, characterizing and verifying the actions of untrusted quantum computers. This work can be viewed as a step towards leveraging more general cryptographic primitives, and suggests that leakage resilience may be a powerful new tool in this direction.

\paragraph{Acknowledgments}
T.V.\ is supported by AFOSR YIP award number FA9550-16-1-0495, a grant from the Simons Foundation (828076, TV), MURI Grant FA9550-18-1-0161, the NSF QLCI program through grant number OMA-2016245 and the IQIM, an NSF Physics Frontiers Center (NSF Grant PHY-1125565) with support of the Gordon and Betty Moore Foundation (GBMF-12500028). U.M.\ is supported by an NSF CAREER grant (2048204). U.V is supported by Vannevar Bush faculty fellowship N00014-17-1-3025, MURI Grant FA9550-18-1-0161, and DOE NQISRC Quantum Systems Accelerator grant FP00010905.

\section{Overview of results and techniques}

We begin with a reminder of the randomness protocol from \cite{oneproverrandomness}. For clarity we delay a discussion of parameters to Section~\ref{sec:prelim}. We let $q\geq 2$ be a prime and denote vectors and matrices over $\Z_q$ using bold font such as $\*u$, $\*A$.
The protocol from~\cite{oneproverrandomness}  (with some of the more technical details omitted for simplicity) consists in the sequential  repetition of the following elementary $2$-round protocol between a trusted classical \emph{verifier} and an untrusted quantum polynomial time device, or \emph{prover}:

\begin{protocol}{\textbf{Qubit Certification Test.}}\label{prot:randomness}
\begin{enumerate}
    \item The verifier samples $s\in\mZ_2^n$, $\*A\in\mZ_q^{m\times n}$ along with its trapdoor $t_A$ and an error vector $\*e\in\mZ_q^m$. The verifier computes $\*u = \*A s + \*e$.
    \item The verifier sends the sample $(\*A,\*u)$ to the prover.
    \item The prover reports an image $\*y\in\Z_q^m$ to the verifier.
    \item The verifier chooses to either run a test round or a generation round with equal probability $\frac{1}{2}$:
    \begin{enumerate}
        \item \textbf{Generation round.} The verifier runs a preimage test: he asks the prover for a bit $b$ and a preimage $\*x$ of $\*y$ and checks that $\lVert \*y - \*A\*x - b\cdot \*u\rVert$ is sufficiently small (at most $B_P\sqrt{m}$). 
        \item \textbf{Test round.} The verifier runs an equation test: the verifier uses $t_A$ to compute $\*x_0$ such that $\*y = \*A\*x_0 + \*e_0$ for some small $\*e_0$ and asks the prover for an equation, which consists of a bit $c$ and a string $d\in\{0,1\}^{n\log q}$. The verifier checks that $c = d\cdot (\*x_0\oplus (\*x_0 - s))$.\footnote{Recall from \cite{oneproverrandomness} that the bit $c$ is computed by converting $\*x_0, \*x_0 - s$ to their binary representations, and that there will also be a check on $d$ to ensure that the equation is not trivial.}
    \end{enumerate}
\end{enumerate}
\end{protocol}
To understand the rest of this overview, it is not necessary to understand all the details of the above protocol. It is important to understand what the honest prover's state looks like, which we now describe, as this will provide intuition for the changes we will make to the protocol. After reporting $y$ in Protocol \ref{prot:randomness}, an honest prover would hold the  state
\begin{equation}\label{eq:honestproverpreimage}
    \frac{1}{\sqrt{2}}\big(\ket{0}\ket{\*x_0} + \ket{1}\ket{\*x_0 - s}\big)\;.
\end{equation}
To pass the preimage test the prover would measure their state in the standard basis, and to pass the equation test they would measure the state in the Hadamard basis. Both tests would be passed with certainty.

In \cite{oneproverrandomness} it was shown that each generation round of the qubit certification test generates a fraction of a bit of randomness. The proof relies on a property called the \emph{adaptive hardcore bit property}, which (informally) states that if a computationally bounded prover can pass \textit{both} the equation test and the preimage test at the same time then the prover can break the Learning With Errors assumption by learning a bit of the secret $s$. The adaptive hardcore bit property implies that the prover must hold an essentially uniform superposition over both $x_0$ and $x_0 - s$, rather than a collapsed state. This is because if the superposition was not uniform then the prover could measure it in the computational basis without completely disturbing it, and then generate an equation. The prover would thus hold both an equation and a preimage at the same time, violating the adaptive hardcore bit property.

\medskip

The difficulty of learning even a bit of the secret $s$ is a commonly used notion in classical cryptography referred to as leakage resilience. To prove leakage resilience, the idea is to replace the matrix $\*A$ with a computationally indistinguishable \textit{lossy} matrix $\tilde{\*A} = \*B\*C + \*E$, where $\*C\in\mZ_q^{\ell \times n}$ ($n\approx \ell \log q)$). The computational indistinguishability follows from the security of a smaller instance of LWE (with secrets in $\mZ_q^{\ell}$ rather than $\mZ_q^n$). Now, the sample $\tilde{\*A}s + \*e$ hides $s$ quite well, as the matrix $\*C$ has a kernel of size $q^{n-\ell}$. As mentioned in the introduction there is one  particularly intuitive path to make use of leakage resilience  to collect a polynomial amount of randomness per round: we could replace the matrix $\*A$ in Protocol \ref{prot:randomness} with a lossy matrix $\tilde{\*A}$. As a result, for a given image $\*y$ reported by the prover there are now up to $q^{n-\ell}$ valid preimages for each $b$, rather than just 1; the hope is to leverage this structure to collect linearly many bits of randomness per generation round. 

The key difficulty lies in the equation test: the equation test is crucial for ensuring that the prover is actually generating randomness, as passing the equation test implies that the prover must actually have a superposition over both $\*x_0$ and $\*x_0 - s$. If the superposition over two preimages is replaced by a superposition over many, it is unclear how to perform a check in the Hadamard basis. Moreover, in the equation test in step 4(b) of Protocol \ref{prot:randomness}, the verifier relied on using a trapdoor to invert $\*y$, thereby finding $\*x_0$, which was used to compute $d\cdot (\*x_0\oplus (\*x_0 - s))$. Of course, the verifier can no longer rely on a trapdoor, as the image $\*y$ inherently has many valid preimages. Additionally, a useful analytical tool would be to compare the prover's behavior in the two different protocols arising from using either $\*A$ or $\tilde{\*A}$. If a trapdoor is required, we cannot use computational indistinguishability to carry out such an analysis.   

To handle both issues at once, consider a hypothetical variant of Protocol \ref{prot:randomness} in which the verifier is quantum rather than classical and is allowed access to the prover's state. Note that this hypothetical protocol will only be used as a tool in the analysis; our final protocol will still have a classical verifier. The advantage of using a quantum verifier is that the verifier's knowledge of the secret $s$, as well as access to the prover's state, can be used to compute $x_0$, thereby removing the need for a trapdoor. This can be done as long as the prover's state is in a superposition over preimages, which we can assume if the prover passes the preimage test perfectly. For intuition, observe how this can be done in the case of an honest prover. The prover's state (as in equation \ref{eq:honestproverpreimage}) is
\begin{equation}
    \frac{1}{\sqrt{2}}\big(\ket{0}\ket{\*x_0} + \ket{1}\ket{\*x_0 - s} \big)\;.\nonumber
\end{equation}
The verifier can copy the preimage ($\*x_0$ or $\*x_0 - s$) into an auxiliary register. He can then use his knowledge of $s$ to perform a controlled shift by $s$; if the first bit of the prover's state is 1, the verifier adds $s$ to the auxiliary register. As a result, the auxiliary register holds $\*x_0$, thereby removing the need for a trapdoor. Moreover, if the state was initially in a superposition over many preimages, rather than just 2, the verifier's actions will collapse the state to only two preimages, $\*x_0$ and $\*x_0 - s$, thereby making it possible to run the equation test. This protocol is referred to as the \emph{quantum verifier preimage extraction protocol}, and is described in detail in Protocol \ref{prot:extraction}.

The preceding idea is the core of our quantum verifier lossy randomness protocol (recall that it is a hypothetical protocol, which will only be used as an analytical tool). The protocol is essentially a duplicate of Protocol \ref{prot:randomness}, except with the matrix $\*A$ replaced by a lossy matrix $\tilde{\*A}$ and the trapdoor recovery of $\*x_0$ replaced by the quantum extraction protocol described above. 
\begin{protocol}{\textbf{Quantum Verifier Lossy Randomness Protocol}}\label{prot:lossyrandomness}
\begin{enumerate}
    \item The verifier samples $s\in\{0,1\}^n$, a lossy matrix $\tilde{\*A}\in\mZ_q^{m\times n}$ and an error vector $\*e\in\mZ_q^m$. The verifier computes $\*u = \tilde{\*A}s + \*e$.
    \item The verifier sends the sample $(\tilde{\*A},\*u)$ to the prover.
    \item The prover reports an image $\*y \in\mZ_q^m$ to the verifier.
     \item The verifier chooses to either run a test round or a generation round:
    \begin{enumerate}
        \item \textbf{Generation round: } The verifier runs a preimage test: he asks the prover for a bit $b$ and a corresponding preimage $\*x$ of $\*y$, and checks that $\lVert \*y - \tilde{\*A}\*x - b\cdot \*u\rVert$ is sufficiently small (at most $B_P\sqrt{m}$). 
        \item \textbf{Test round: } The verifier runs an equation test: the verifier uses the preimage extraction protocol (Protocol \ref{prot:extraction}) to compute $\*x_0$ from $\*y$, and then asks the prover for an equation, which consists of a bit $c$ and a string $d\in\{0,1\}^{n\log q}$. The verifier checks that $c = d\cdot (\*x_0\oplus (\*x_0 - s))$.
    \end{enumerate}

\end{enumerate}
\end{protocol}

In Theorem~\ref{thm:lossyhighentropy} we show that any prover that succeeds in the test round of Protocol \ref{prot:lossyrandomness} with high enough probability returns a pair $(b,\*x)$ in the generation round of the protocol that 
has min-entropy that scales as $n - \ell\log q$, even conditioned on data exchanged in the first round of the protocol (i.e.\ $\tilde{\*A}$, $\*u$ and $\*y$).\footnote{Precisely, we analyze the smooth min-entropy of the distribution: a prover with success $1-\eps$ in the test round will lead to a distribution that is within statistical distance $\poly(\eps)$ of a distribution with high conditional entropy.}
The proof is relatively simple, and consists of two key steps. The first step is observe that in order to pass the test round, the prover's collapsed state after the preimage extraction protocol must be in essentially a uniform superposition over the two preimages. To prove this, we use the fact that this statement is true in Protocol \ref{prot:randomness}; therefore, if it did not hold for Protocol \ref{prot:lossyrandomness}, it would be a means of distinguishing between a lossy matrix $\tilde{\*A}$ and a uniform matrix $\*A$. The second step relies on the fact that the secret $s$ is information theoretically hidden; even given the lossy sample $\*u$, there are exponentially many possible values for $s$. Therefore, prior to the verifier's preimage extraction protocol, the prover's state must be in an essentially uniform superposition over the exponentially many preimages of $\*y$, since the state must collapse to a balanced superposition over two preimages regardless of the verifier's choice of $s$. 

Our final protocol is a classical verifier lossy randomness protocol, which uses Protocol \ref{prot:lossyrandomness} as a tool in the analysis. To revert back to a classical verifier protocol, ideally we could just remove step 4(b) of Protocol \ref{prot:lossyrandomness}, as this is the only step that requires a quantum verifier. Of course, removing this step is problematic; without an equation test, we cannot guarantee that the prover ever holds a superposition over preimages (rather than just 1 preimage). Our solution is to replace step 4(b) of Protocol \ref{prot:lossyrandomness} with step 4(b) of Protocol \ref{prot:randomness}; in other words, our classical verifier lossy randomness protocol combines the test rounds of the original randomness protocol (Protocol \ref{prot:randomness}) with the generation rounds of our lossy randomness protocol (Protocol \ref{prot:lossyrandomness}):
\begin{protocol}{\textbf{Lossy Randomness Protocol}}\label{prot:amplifyrandomness}
The verifier randomly chooses whether to execute a generation round or a test round. In the case of a generation round, the verifier and prover proceed as in Protocol \ref{prot:lossyrandomness}, from the first step through the generation round (i.e. steps 1, 2, 3 and 4a). In the case of a test round, the verifier and prover proceed with Protocol \ref{prot:randomness} steps 1, 2, 3 and 4b.
\end{protocol}

Protocol~\ref{prot:amplifyrandomness} is restated in more detail in Section~\ref{sec:analysis-1}. 
The key in combining these protocols is that they are computationally indistinguishable: a prover passes the test (resp. generation) round of Protocol \ref{prot:randomness} if and only if they also pass the test (resp. generation) round of Protocol \ref{prot:lossyrandomness}. Otherwise, the prover could be used to distinguish between lossy and uniform matrices. This computational distinguishability argument relies on the fact that the probability of passing either protocol can be computed using a quantum algorithm with knowledge of the secret $s$; again, in Protocol \ref{prot:randomness}, the use of the trapdoor can be replaced with a quantum procedure with knowledge of $s$. Therefore, if the two probabilities differ non-negligibly, knowledge of $s$ can be used to distinguish between lossy and uniform matrices (using a quantum algorithm). These arguments are formalized in Section \ref{sec:analysis-1}.

\section{Technical preliminaries}
\label{sec:prelim}

We start with some notation and definitions required for the formal analysis, which follows in Section~\ref{sec:analysis}. 
 
\subsection{Notation}

We use the shorthands PPT and QPT for probabilistic and quantum polynomial-time respectively. We write $\negl(\lambda)$ for any function $f:\mathbb{N}\to\R_+$ such that for any polynomial $p$, $\lim_{\lambda\to\infty} p(\lambda) f(\lambda)=0$. For $D$ a distribution on a finite set $\mathcal{X}$ we write $e\leftarrow D$ for a random element $e\in \mathcal{X}$ with distribution $D$. If $D$ is uniform we also write $e\leftarrow_U \mathcal{X}$. We use bold font such as $\*u, \*A$ to denote vectors or matrices whose entries are taken from $\mZ_q$.

\paragraph{Error distributions.}
For a positive real $B$ and a positive integer $q$, the 
truncated discrete Gaussian distribution over $\mZ_q$ with parameter $B$ is the distribution supported on $\{x\in\mZ_q:\,\|x\|\leq B\}$ with density
\begin{equation}\label{eq:d-bounded-def}
 D_{\mZ_q,B}(x) \,=\, \frac{e^{\frac{-\pi\lVert x\rVert^2}{B^2}}}{\sum\limits_{x\in\mZ_q,\, \|x\|\leq B}e^{\frac{-\pi\lVert x\rVert^2}{B^2}}} \;.
\end{equation}
More generally, for a positive integer $m$ the truncated discrete Gaussian distribution over $\mZ_q^m$ with parameter $B$ is the distribution supported on $\{x\in\mZ_q^m:\,\|x\|\leq B\sqrt{m}\}$ with density
\begin{equation}\label{eq:d-bounded-def-m}
\forall x = (x_1,\ldots,x_m) \in \mZ_q^m\;,\qquad D_{\mZ_q^m,B}(x) \,=\, D_{\mZ_q,B}(x_1)\cdots D_{\mZ_q,B}(x_m)\;.
\end{equation}

\paragraph{Min-entropy.} For random variables $X$ and $Y$ and $\eps\geq 0$ define the \emph{$\eps$-smooth conditional entropy of $X$ given $Y$} by
\[ H_\infty^\eps(X|Y) \,=\, \sup_{\tilde{X}\tilde{Y}:\, \|XY-\tilde{X}\tilde{Y}\|_{TV}\leq \eps} -\log\Big( \max_{y:\, \Pr(\tilde{Y}=y)>0}\; \max_x \;\Pr\big(\tilde{X}=x |\tilde{Y}=y\big)\Big)\;.\]
Here the supremum is taken over all sub-normalized pairs of random variables $(\tilde{X},\tilde{Y})$ with total variation distance at most $\eps$ from $(X,Y)$. 

\subsection{Parameters}

For convenience we collect here a description of the main parameters that are used in all our protocols. 
Let $\lambda$ be a security parameter. All other parameters are functions of $\lambda$. Let $q\geq 2$ be a prime integer. 
Let $\ell,n,m,w\geq 1$ be polynomially bounded functions of $\lambda$ and $B_L, B_V, B_P$ be positive integers such that the following conditions hold
\begin{enumerate}[label=({A.\arabic*})]
\item\label{a1}$n = \Omega(\ell \log q)$ and $m = \Omega(n\log q)$,
\item $w=n\lceil \log q\rceil$,
\item $B_P = \frac{q}{2C_T\sqrt{mn\log q}}$, for $C_T$ a universal constant,
\item $ 2\sqrt{n} \leq B_L < B_V < B_P$,
\item\label{a5} The ratios $\frac{B_P}{B_V}$ and $\frac{B_V}{B_L}$ are both super-polynomial  in $\lambda$.
\end{enumerate}
This setting of parameters is the same as the one in~\cite{oneproverrandomness}. The computational assumption that underlies our results is the same assumption under which the function family $\mathcal{F}_{\lwe}$ introduced in~\cite{oneproverrandomness} is an NTCF family, namely the Learning with Errors (LWE) assumptions with parameters $\ell,q$ and $B_L$. (In contrast to~\cite{oneproverrandomness}, since we do not consider quantum side information, we do not need to make the assumption with respect to quantum advice.)

\begin{deff}\label{def:lwe-ass}
For a security parameter $\lambda$, let $n,m,q\in \bbN$ be integer functions of $\lambda$. Let $\chi = \chi(\lambda)$ be a distribution over $\mZ$. The $\lwe_{n,m,q,\chi}$ problem is to distinguish between the distributions $(\*A, \*A\*s + \*e \pmod{q})$ and $(\*A, \*u)$, where $\*A\leftarrow_U \bbZ_q^{n \times m}$, $\*s\leftarrow_U \mZ_q^n$, $\vc{e}\leftarrow\chi^m$, and $\*u \leftarrow_U \mZ_q^m$. Often we consider the hardness of solving $\lwe$ for {any} function $m$ such that $m$ is at most a polynomial in $n \log q$. This problem is denoted $\lwe_{n,q,\chi}$. 

In this paper we make the assumption that no quantum polynomial-time procedure can solve the $\lwe_{n,q,\chi}$ problem with more than a negligible advantage in $\lambda$, when $\chi$ is the distribution $D_{\mZ_q^m,B}$ with $B\in \{B_L,B_V,B_P\}$. We refer to this assumption as ``the $\qlwe$ assumption,'' leaving the parameters $n$ and $q$ implicit. 
\end{deff}

The conditions~\ref{a1} to~\ref{a5} are such that two additional properties hold, which were already used in~\cite{oneproverrandomness}. Informally, the first property is the possibility to generate near-uniform LWE matrices $\*A$ together with a trapdoor allowing easy inversion; the second property is the existence of a computationally indistinguishable lossy mode. More formally, we make use of the following. 

\begin{thm}[Theorem 5.1 in~\cite{miccancio2012}]\label{thm:trapdoor}
Let $n,m\geq 1$ and $q\geq 2$ be such that $m = \Omega(n\log q)$. There is an efficient randomized algorithm $\GenTrap(1^n,1^m,q)$ that returns a matrix $\*A \in \mZ_q^{m\times n}$ and a trapdoor $t_{\*A}$ such that the distribution of $\*A$ is negligibly (in $n$) close to the uniform distribution. Moreover, there is an efficient algorithm $\Invert$ that, on input $\*A, t_{\*A}$ and $\*A\*s+\*e$ where $\|\*e\| \leq q/(C_T\sqrt{n\log q})$ and $C_T$ is a universal constant, returns $\*s$ and $\*e$ with overwhelming probability over $(\*A,t_{\*A})\leftarrow \GenTrap(1^n,1^m,q)$. \end{thm}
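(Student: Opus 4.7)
\medskip

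\noindent\textbf{Proof proposal.} The plan is to follow the gadget-based construction of Micciancio--Peikert. First I would introduce a \emph{gadget matrix} $\*G \in \mZ_q^{w\times n}$ with $w = n\lceil\log q\rceil$ defined by $\*G = \vc{g}\otimes \*I_n$, where $\vc{g} = (1,2,4,\ldots,2^{\lceil\log q\rceil-1})^T$. The structure of $\*G$ makes the noisy problem ``recover $\*s$ from $\*G\*s + \*e'$'' trivial to solve whenever $\|\*e'\|_\infty < q/4$: one decodes each coordinate of $\*s$ from its binary expansion by reading off the bits of the corresponding block and rounding.

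To generate $(\*A,t_{\*A})$, I would sample a uniform ``top'' block $\*A_0 \leftarrow_U \mZ_q^{(m-w)\times n}$, draw a random matrix $\*R \in \mZ_q^{(m-w)\times w}$ with entries from a discrete Gaussian of small width, and set the ``bottom'' block to $\*A_1 = \*G - \*R^T \*A_0$. The output is $\*A = \bigl(\begin{smallmatrix} \*A_0 \\ \*A_1 \end{smallmatrix}\bigr)$ and $t_{\*A} = \*R$. To argue that $\*A$ is negligibly close to uniform, I would invoke a regularity / leftover-hash argument for $\*R$: since $\*R$ has sufficient min-entropy and $m - w = \Omega(n\log q)$, the joint distribution $(\*A_0, \*R^T \*A_0)$ is statistically close to uniform on $\mZ_q^{(m-w)\times n}\times \mZ_q^{w\times n}$, so shifting by $\*G$ preserves uniformity of the second coordinate.

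For \textsc{Invert}, observe that $\bigl[\*R^T \mid \*I_w\bigr]\,\*A = \*R^T\*A_0 + \*A_1 = \*G$. Given $\*y = \*A\*s + \*e$, I would parse $\*e = (\*e_0;\*e_1)$ and compute
\[
  \bigl[\*R^T \mid \*I_w\bigr]\,\*y \;=\; \*G\*s + \*R^T \*e_0 + \*e_1\;,
\]
then apply the gadget inversion routine to recover $\*s$, and finally set $\*e = \*y - \*A\*s$. Correctness reduces to showing $\|\*R^T\*e_0 + \*e_1\|_\infty < q/4$. Since $\|\*R\|_{\mathrm{op}} = O(\sqrt{m})$ with overwhelming probability over $\*R$ (standard subgaussian tail bound) and $\|\*e\| \leq q/(C_T\sqrt{n\log q})$, the resulting error is bounded by $O(\sqrt{m}) \cdot q/(C_T\sqrt{n\log q})$, which sits comfortably below $q/4$ provided $C_T$ is chosen to dominate the hidden constants (recall $m = \Theta(n\log q)$).

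The main obstacle, and the step that demands most care, is the simultaneous calibration of the Gaussian width for $\*R$: it must be small enough that $\|\*R^T \*e_0\|_\infty$ stays within the gadget's decoding radius for every admissible input error, while also being large enough that the regularity lemma yields negligible statistical distance from uniform for $\*A_1$. Everything else---decoding $\*G$, the leftover-hash step, and the operator-norm tail bound on $\*R$---is standard and amounts to routine parameter bookkeeping under the constraints~\ref{a1}--\ref{a5}.
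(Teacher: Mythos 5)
The paper does not prove Theorem~\ref{thm:trapdoor}; it is imported verbatim as Theorem~5.1 of Micciancio and Peikert~\cite{miccancio2012}, so the only meaningful comparison is against that source. Your sketch faithfully reconstructs the Micciancio--Peikert gadget trapdoor, transposed to match the $\*A\in\mZ_q^{m\times n}$ convention used here: the block decomposition with $\*A_1 = \*G - \*R^T\*A_0$, the identity $[\*R^T\mid\*I_w]\*A = \*G$, the leftover-hash/regularity argument for near-uniformity of $\*A$, and the operator-norm bound on $\*R$ driving the decoding error below the gadget's radius are exactly the ingredients of the original proof.

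One simplification dissolves what you flag as the main obstacle: since the theorem only demands \emph{inversion} (not discrete Gaussian preimage sampling, which is what actually forces a carefully tuned Gaussian $\*R$ in the full Micciancio--Peikert toolkit), $\*R$ may simply be taken with i.i.d.\ entries uniform in $\{-1,1\}$. This gives exactly one bit of min-entropy per entry, which is ample for the leftover hash lemma once $m-w = \Omega(n\log q)$, and $\|\*R\|_{\mathrm{op}} = O(\sqrt{m})$ still holds with overwhelming probability by a standard sub-Gaussian concentration bound. There is then no width to calibrate, and the constant $C_T$ is fixed purely by the concentration constant and the gadget's $\ell_\infty$ decoding radius.
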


\begin{deff}\label{def:lossy}
Let $\chi = \chi(\lambda)$ be an efficiently sampleable distribution over $\mZ_q$. Define a lossy sampler $\tilde{\*A} \leftarrow \lossy(1^n,1^m,1^\ell,q,\chi)$ by  $\tilde{\*A} = \*B\*C +\*F$, where $\*B\leftarrow_U \mZ_q^{m\times \ell}$, $\*C\leftarrow_U \mZ_q^{\ell \times n}$, $\*F\leftarrow \chi^{m\times n}$. 
\end{deff}

\begin{thm}[Lemma 3.2 in~\cite{lwr}]\label{thm:lossy}
Under the $\qlwe$ assumption, the distribution of a random $\tilde{\*A} \leftarrow \lossy(1^n,1^m,1^\ell,q,\chi)$ is computationally indistinguishable from $\*A\leftarrow_U \mZ_q^{m\times n}$. 
\end{thm}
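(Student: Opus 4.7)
The plan is to prove this by a standard hybrid argument that reduces distinguishing $\tilde{\*A}$ from uniform to solving an $\lwe$ instance of dimension $\ell$. The key observation is that the lossy sampler can be viewed column-by-column as a batch of $\lwe$ samples with secret dimension $\ell$ rather than $n$: if we write $\*C = [\*c_1 \mid \cdots \mid \*c_n]$ and $\*F = [\*f_1 \mid \cdots \mid \*f_n]$, then the $i$-th column of $\tilde{\*A}$ equals $\*B\*c_i + \*f_i$, which is exactly an $\lwe$ sample with public matrix $\*B \in \mZ_q^{m \times \ell}$, secret $\*c_i \leftarrow_U \mZ_q^\ell$, and error $\*f_i \leftarrow \chi^m$.

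The hybrid sequence $H_0, H_1, \ldots, H_n$ is defined by sampling $\*B \leftarrow_U \mZ_q^{m \times \ell}$ and setting the first $i$ columns of the challenge matrix to independent uniform vectors in $\mZ_q^m$ and the remaining $n-i$ columns to $\*B \*c_j + \*f_j$ with fresh $\*c_j, \*f_j$. Hybrid $H_0$ is exactly the lossy distribution $\tilde{\*A}$ (up to a negligible gap that comes from the fact that sampling $\*B$ is just a fixed matrix rather than marginalizing it away from $\tilde{\*A}$'s definition, which is a non-issue), while $H_n$ is $\*A \leftarrow_U \mZ_q^{m \times n}$. Any distinguisher between $H_{i-1}$ and $H_i$ yields, by a standard reduction that samples the other columns itself, a distinguisher for a single $\lwe_{\ell, m, q, \chi}$ sample against $(\*B, \*u)$ with $\*u$ uniform; this contradicts the $\qlwe$ assumption as applied at secret dimension $\ell$ (which follows from the hypotheses~\ref{a1}--\ref{a5} on the parameter regime, noting that $\ell$ controls the LWE-hard instance underlying the lossy construction). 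Summing over the $n = \poly(\lambda)$ hybrids preserves negligibility, yielding the claim.

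The only potentially subtle point is that the claim is about statistical/computational indistinguishability of $\tilde{\*A}$ \emph{marginally}, without releasing $\*B$ or $\*C$; since the reduction actually proves indistinguishability of the joint distribution $(\*B, \tilde{\*A})$ from $(\*B, \*A)$, marginalization out $\*B$ only strengthens the conclusion. There is no real obstacle here — the main ``work'' in the argument is just bookkeeping the hybrid reduction carefully. The result is well-known and we would simply refer to Lemma 3.2 of~\cite{lwr} for the formal statement, our contribution being only to adopt it in our parameter setting. Accordingly, the proof in the paper will likely consist of a short paragraph invoking the hybrid argument sketched above.
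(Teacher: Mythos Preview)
Your hybrid argument is the standard proof of this fact and is correct. However, the paper does not give a proof at all: the theorem is simply stated with a citation to~\cite{lwr} and no argument is provided, not even the short paragraph you anticipated. So your proposal is more detailed than what the paper contains; there is nothing to compare on the proof side.
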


\section{Analysis}
\label{sec:analysis}

Most of this section focuses on analyzing Protocol \ref{prot:lossyrandomness}.
 We begin by describing how to model a general prover; this characterization will be equivalent for all three protocols (Protocols \ref{prot:randomness}, \ref{prot:lossyrandomness}, \ref{prot:amplifyrandomness}). We then analyze Protocol \ref{prot:lossyrandomness}: we first restate the protocol, then show its completeness and finally soundness: for any computationally bounded prover that succeeds with large enough constant probability in the protocol, the distribution of $(b,x)$ obtained from a generation round contains a large amount of (smooth) conditional min-entropy (see Theorem~\ref{thm:lossyhighentropy}). Finally, in Section~\ref{sec:analysis-1} we explain how the analysis of Protocol \ref{prot:lossyrandomness} extends to Protocol \ref{prot:amplifyrandomness}.

\subsection{Prover Behavior}\label{sec:provercharacterization}

We begin by introducing a unified manner to refer to a prover's actions, whether the prover interacts with a verifier that executes Protocol~\ref{prot:randomness}, Protocol~\ref{prot:lossyrandomness} or Protocol~\ref{prot:amplifyrandomness}. All protocols have the following structure.

\begin{protocol}{\bf Protocol template}\label{prot:template}
\begin{enumerate}
    \item The verifier sends a pair $(\*A,\*u)\in \mZ_q^{m\times n} \times \mZ_q^m$ to the prover. (All parties are assumed to have agreed on integers $\lambda,m,n$ and $q$.)
    \item The prover reports a $\*y \in\mZ_q^m$ to the verifier.
     \item The verifier chooses a challenge $C\leftarrow_U\{G,T\}$ and sends $C$ to the prover.
    \begin{enumerate}
        \item (Case $C=G$, ``Generation round'':) The prover returns a bit $b$ and an $\*x\in\mZ_q^n$. 
        \item (Case $C=T$, ``Test round'':) The prover returns a bit $c$ and a string $d\in\{0,1\}^{n\lceil\log q\rceil}$. 
    \end{enumerate}
\end{enumerate}
\end{protocol}

Any prover in a protocol that has the same message structure as Protocol~\ref{prot:template} can be modeled as follows. The Hilbert space associated with the prover has four registers. The first register contains $m\lceil\log q\rceil$ qubits that will hold the provers' first answer $\*y$. The second register contains the first component of the prover's second answer and consists of a single qubit. The third register contains the second component of the prover's second answer and consists of $n\lceil\log q\rceil$ qubits. Finally, the fourth register contains the workspace of the prover (which is also used to store messages received from the verifier). 

Upon reception of the first message $(\*A,\*u)$ the prover may perform an arbitrary unitary $U'$ on all registers (including the private register, which is assumed to contain a copy of $(\*A,\*u)$) followed by a measurement of the first register in the standard basis to obtain its first response $\*y$. Upon reception of $C\in\{G,T\}$ the prover applies an arbitrary unitary $U'_C$ on their entire space, measures the second and third registers (collectively referred to as the \emph{preimage register}) and return the outcome obtained. 

In all protocols considered the honest behavior expected from the prover is the same, and is identical to the honest behavior in Protocol~\ref{prot:randomness} from~\cite{oneproverrandomness}. For clarity we do not reproduce all details here; informally, at the first step the prover is expected to prepare a uniform superposition over $(b,\*x)\in \{0,1\}\times \mZ_q^n$ and evaluate the function 
\[f_{\*A,\*u}:\;(b,\*x)\,\mapsto \,{\*A}\*x+b\, \*u + \*e\]
in superposition, where $\*e\leftarrow D_{\mZ_q^m,B_P}$. (The error vector $\*e$ also needs to be computed in superposition; we refer to~\cite[Section 5.3]{oneproverrandomness} for a detailed description of how this is performed.) The prover then measures the output register to obtain $\*y\in\mZ_q^m$. At the second step the prover is expected to measure the preimage register either in the computational basis (case $C=G$) or the Hadamard basis (case $C=T$) and report the outcome. Let $\hat{U}, \hat{U}_G=\Id$ and $\hat{U}_T=H^{\otimes(1+n\lceil \log q\rceil)}$ be the three unitaries associated with the honest prover behaviour. 

\begin{deff}
Suppose given a prover $\Prov$ for Protocol~\ref{prot:template} whose three unitaries are 
 $(U',U'_G,U'_T)$. Let $U_0 = \hat{U}^\dagger U'U'_G$ and $U=\hat{U}_G^\dagger (U'_G)^\dagger U'_T$. We say that the prover $\Prov$ is \emph{characterized by} $(U_0,U)$. 
\end{deff}

The definition is justified as follows. For any prover using unitaries $(U',U'_G,U'_T)$
 we may always assume that $U'_G$ and $U'_T$ both commute with a standard basis measurement of the prover's first register, because they are only applied after a measurement of it. Therefore, the prover is equivalently represented by the triple $(\hat{U}\hat{U}^\dagger U' U'_G, \Id, \hat{U}_G^\dagger (U'_G)^\dagger U'_T)$. It is thus without loss of generality that we restrict our attention to provers characterized by a pair $(U_0,U)$, where $U_0$ represents the prover's ``deviation'' from the honest prover in the first round, and $U$ its deviation in the second round, for the case $C=T$.


\subsection{Analysis of Quantum Verifier Lossy Randomness Protocol (Protocol \ref{prot:lossyrandomness})}\label{sec:lossyanalysis}

In this section we analyze Protocol \ref{prot:lossyrandomness}. For clarity we describe the protocol by filling in the verifier's actions in the template Protocol~\ref{prot:template}. First we need the following sub-protocol. This protocol describes an interaction between a quantum verifier and a quantum prover whose space has the same structure as considered in the other protocols, i.e.\ it consists of four registers, the second and third of which are collectively referred to as the ``preimage register''. In the protocol, the verifier is assumed to possess some $s\in\{0,1\}^n$.

\begin{protocol}{\textbf{Quantum Verifier Preimage Extraction Protocol}}\label{prot:extraction}
\begin{enumerate}
\item The prover sends their (quantum) preimage register to the verifier.
    \item  The verifier appends an auxiliary register of $n\lceil\log q\rceil$ qubits.
    \item The verifier coherently copies (in the standard basis) the contents of the second preimage register to the auxiliary register.
    \item  Conditioned on the first preimage register being in state $\ket{1}$ the verifier adds $s$ to the contents of the auxiliary register. (This is done coherently--if the first register is in state $\ket{0}$ then no action is performed.)
    \item The verifier measure the auxiliary register and stores the result. They return the preimage register to the prover. 
\end{enumerate}
\end{protocol}

We now restate the quantum verifier lossy randomness protocol. 

~\newline
\textbf{Protocol \ref{prot:lossyrandomness}: \textit{Quantum Verifier Lossy Randomness Protocol}}
\begin{enumerate}
    \item \textit{Let $\ell,n,m,q$ be functions of $\lambda$ satisfying the conditions~\ref{a1}--\ref{a5}. Let $\chi$ be the distribution $D_{\mZ_q^m,B_V}$. 
		The verifier samples $s\leftarrow_U \{0,1\}^n$, $\tilde{\*A}\leftarrow \lossy(1^n,1^m,1^\ell,q,D_{\mZ_q^m,B_L})$ and $\*e\leftarrow \chi$. The verifier computes $\*u = \tilde{\*A}s + \*e$. They send $(\tilde{\*A},\*u)$ to the prover.}
    \item \textit{The prover reports a $\*y \in\mZ_q^m$ to the verifier.}
     \item \textit{The verifier chooses to either run a generation round or a test round, with equal probability $1/2$ each.}
    \begin{enumerate}
        \item \textit{\textbf{Generation round:} The verifier sends $C=G$ to the prover and receives $(b,\*x)$ from them. They check that $\lVert \*y - \tilde{\*A}\*x - b\,\*u\rVert$ is at most $B_P\sqrt{m}$ and abort if not.  }
        \item \textit{\textbf{Test round:} The verifier uses the preimage extraction protocol (Protocol \ref{prot:extraction}) to compute $\*x_0$ from $\*y$. The verifier then sends $C=T$ to the prover and receives $(c,d)$ from them. The verifier checks that $d\neq 0$ and $c = d\cdot (J(\*x_0)\oplus J(\*x_0 - s))$, where $J$ is the binary representation function, which is applied coordinate-wise.}
    \end{enumerate}
\end{enumerate}

We note that in step 3(b) the prover is expected to return their preimage register exactly in the state that they are in after measurement of $\*y$, and as they would be directly measured in the standard basis in case $C=G$ (see Section~\ref{sec:provercharacterization}). One could call such provers ``semi-honest'' since at that step of the protocol they perform the ``honest'' action. Recall that Protocol \ref{prot:lossyrandomness} is a hypothetical protocol used for analysis purposes only, and in Section~\ref{sec:analysis-1} we will see that it is sufficient for us to analyze such ``semi-honest'' provers in the protocol. 

\subsubsection{Completeness}

We begin with proving completeness of Protocol \ref{prot:lossyrandomness}. To model the randomness generated in a run of the protocol we let $(\tilde{A},U,Y,C,B,X)$ be random variables associated with a transcript of the protocol in the obvious way. Specifically, 
\begin{itemize}
\item $(\tilde{A},U,Y)$ are distributed as $(\tilde{\*A},\*u,\*y)$ in the first round of Protocol~\ref{prot:lossyrandomness}. 
\item $C\in \{G,T\}$ is chosen uniformly at random. 
\item If $C=G$ (resp.\ $C=T$) then $(B,X)$ is distributed as the prover's answer in a generation round (resp.\ test round), conditioned on that type of round having been chosen by the verifier. \end{itemize}
We refer to $(\tilde{A},U,Y,C,B,X)$ as a \emph{random transcript} of the protocol.

\begin{lem}\label{lem:q-compl}
For any setting of parameters satisfying the assumptions~\ref{a1} to~\ref{a5}, there is a prover whose running time is polynomial in $\lambda$ such that the prover is accepted with probability negligibly close to $1$ in Protocol~\ref{prot:lossyrandomness}. Furthermore, there is a negligible function $\eps=\negl(\lambda)$ such that letting $(\tilde{A},U,Y,C,B,X)$ denote a random transcript of the protocol, it holds that $H^\eps_\infty(BX|\tilde{A}UY,C=G)\geq n-\ell\log q-O(1)$. 
\end{lem}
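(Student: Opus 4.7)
The plan is (i) to instantiate the honest prover of~\cite{oneproverrandomness} with the lossy matrix $\tilde{\*A}$; (ii) to check that both the preimage and equation tests pass with overwhelming probability against this prover; and (iii) to show that leakage resilience of LWE translates into the claimed bound on the smooth min-entropy of the generation-round output. For (i), on input $(\tilde{\*A},\*u)$ the prover prepares $\frac{1}{\sqrt{2q^n}}\sum_{b,\*x}\ket{b}\ket{\*x}$, coherently appends $\sum_{\*e}\sqrt{D_{\mZ_q^m,B_P}(\*e)}\ket{\*e}$ (QPT by~\cite[Sec.~5.3]{oneproverrandomness}), evaluates $\tilde{\*A}\*x+b\*u+\*e$ into an output register, and measures it to obtain $\*y$; in a generation round it measures the preimage register in the standard basis, and in a test round it hands the register back to the verifier for Protocol~\ref{prot:extraction} and then applies $H^{\otimes(1+n\lceil\log q\rceil)}$. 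Completeness of the generation round is immediate: by the truncation of $D_{\mZ_q^m,B_P}$, the measured $(b,\*x)$ always satisfies $\lVert\*y-\tilde{\*A}\*x-b\*u\rVert\leq B_P\sqrt m$.

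For completeness of the test round, write $\*u=\tilde{\*A}\*s+\*e^*$ with $\lVert\*e^*\rVert\leq B_V\sqrt m$. After the verifier measures the auxiliary register of Protocol~\ref{prot:extraction} and obtains an outcome $\*x_0$, only two branches of the prover's post-measurement state survive: $(0,\*x_0)$ with residual $\*e_0=\*y-\tilde{\*A}\*x_0$ and $(1,\*x_0-\*s)$ with residual $\*e_0-\*e^*$. The two amplitudes $\sqrt{D_{\mZ_q^m,B_P}(\*e_0)}$ and $\sqrt{D_{\mZ_q^m,B_P}(\*e_0-\*e^*)}$ differ by at most $\exp(O(mB_V/B_P))=1+\negl(\lambda)$ by~\ref{a5}, so the collapsed state is within negligible trace distance of $\frac{1}{\sqrt 2}(\ket{0}\ket{\*x_0}+\ket{1}\ket{\*x_0-\*s})$. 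Hadamard measurement of this balanced state returns $(c,d)$ with $c=d\cdot(J(\*x_0)\oplus J(\*x_0-\*s))$ deterministically, and $d\neq 0$ except with probability $2^{-n\lceil\log q\rceil}$, so the equation test passes with overwhelming probability.

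For the entropy bound, the conditional distribution of the honest prover's generation-round output given $(\tilde{\*A},\*u,\*y)$ is $p(b,\*x)\propto D_{\mZ_q^m,B_P}(\*y-\tilde{\*A}\*x-b\*u)$. The bijective substitution $\*x'=\*x+b\*s$ turns this into $p(b,\*x')\propto D_{\mZ_q^m,B_P}(\*y-\tilde{\*A}\*x'-b\*e^*)$, and the same $B_P/B_V$ smoothness estimate shows this is $(1+\negl(\lambda))$-close to $D_{\mZ_q^m,B_P}(\*y-\tilde{\*A}\*x')$, independently of $b$. So, up to negligible TV distance, $b$ is uniform and independent of $\*x'$, and $\*x'$ is drawn from a discrete-Gaussian-weighted distribution on near-preimages of $\*y$ under the lossy map $\tilde{\*A}=\*B\*C+\*F$. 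Since $\*C\in\mZ_q^{\ell\times n}$ has kernel of size $q^{n-\ell}$ and $\*F$ is $B_L$-bounded with $B_L/B_P$ super-polynomially small, a standard leakage-resilience-of-LWE argument shows that this discrete Gaussian is within negligible statistical distance of the uniform distribution on a set of size at least $2^{n-\ell\log q-O(1)}$, which together with the extra bit of randomness in $b$ yields the claimed smooth min-entropy bound.

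The main technical obstacle is the leakage-resilience step: showing simultaneously that (i) the discrete Gaussian over preimages of $\*y$ under the lossy map is close to uniform on its support, and (ii) that support has size at least $2^{n-\ell\log q-O(1)}$ with high probability over the protocol randomness. Both facts reduce to a smoothing-parameter estimate for the lattice structure induced by $\ker\*C$ perturbed by $\*F$, and they hold thanks to the super-polynomial gap $B_P/B_L$ from~\ref{a5}. Everything else is bookkeeping anchored on a single quantitative fact: the Gaussian $D_{\mZ_q^m,B_P}$ is $(1+\negl(\lambda))$-flat under any shift of norm $O(B_V\sqrt m)$.
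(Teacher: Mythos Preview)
Your proposal is correct and follows the same outline as the paper: instantiate the honest prover of~\cite{oneproverrandomness}, verify the generation and test rounds pass, and argue the preimage set is large enough to give the min-entropy bound. Your treatment of the test-round completeness (the Gaussian ratio argument showing the collapsed two-term superposition is balanced) is in fact more explicit than the paper's, which simply defers to~\cite{oneproverrandomness} after the extraction step.

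The one genuine difference is in how you propose to establish the preimage-count lower bound (your item (ii)). You point to a smoothing-parameter estimate for the lattice $\ker\*C$ perturbed by $\*F$. The paper avoids this machinery entirely and instead proves the separate Claim~\ref{claim:size-0}: for fixed $\*u$, the indicators $X_s = \mathbbm{1}[\*Cs=\*u]$ over $s\in\{0,1\}^n$ are \emph{pairwise independent} (because distinct binary strings are $\mZ_q$-linearly independent), so Chebyshev applied to $\sum_s X_s$ immediately gives $|\{s\in\{0,1\}^n:\*Cs=\*u\}|\geq \tfrac{2^n-1}{2q^\ell}$ except with negligible probability over $\*C$. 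This two-line counting argument replaces your smoothing step and is considerably more elementary; combined with the Gaussian-flatness observation you already made (that shifts by $\*F\cdot(\text{kernel element})$ have norm $O(B_L\sqrt{mn})\ll B_P$), it directly yields both the support size and the near-uniformity you need. Your route via smoothing would also work, but the pairwise-independence trick is the cleaner tool here.
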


The following claim will be useful in showing the lemma (it is stronger than necessary, but is stated in this manner since it will also be used in the soundness proof): 

\begin{claim}\label{claim:size-0}
Let $\*u\in\mZ_q^n$. For $\tilde{\*A}\in \mZ_q^{m\times n}$ let $\mathcal{X}_{\tilde{\*A}} = \{\*x\in\mZ_q^n:\, \|\tilde{\*A}\*x-\*u\|\leq B_P\sqrt{m}\}$. Then
\begin{eqnarray}
    \Pr_{\tilde{\*A}\leftarrow\lossy(1^n,1^m,1^\ell,q,D_{\mZ_q^m,B_V})}\Big(\big|\mathcal{X}_{\tilde{\*A}}\cap\Z_2^n\big|\, \leq\, \frac{2^n-1}{2q^\ell} \Big) \,=\,\negl(\lambda)\;.
\end{eqnarray}
\end{claim}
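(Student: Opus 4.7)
The plan is to reduce the claim to a classical concentration estimate on the kernel of the random matrix $\*C$ appearing in the lossy decomposition $\tilde{\*A} = \*B\*C + \*F$. The key structural observation is that the $\*B\*C$ component is blind to vectors in $\ker(\*C)$, so coset shifts of a reference preimage by kernel elements give nearly free approximate preimages, with the discrepancy controlled only by the small noise matrix $\*F$.

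First I would fix a reference $\*x_0 \in \{0,1\}^n$ with $\|\tilde{\*A}\*x_0 - \*u\|$ small --- in the protocol context where this claim is used, such an $\*x_0$ exists because $\*u = \tilde{\*A}\*s + \*e$ with $\|\*e\| \leq B_V\sqrt{m}$. For any $\*x \in \{0,1\}^n$ in the coset $\*x_0 + \ker(\*C)$, the difference $\*x - \*x_0 \in \{-1,0,1\}^n$ lies in $\ker(\*C)$, so $\tilde{\*A}(\*x - \*x_0) = \*F(\*x - \*x_0)$. Since the entries of $\*F$ come from a truncated Gaussian with parameter $B_V$, each column of $\*F$ has norm at most $B_V\sqrt{m}$, and the triangle inequality gives $\|\*F(\*x - \*x_0)\| \leq n B_V\sqrt{m}$. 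Combined with $B_P / B_V$ super-polynomial from \ref{a5}, this places $\*x$ in $\mathcal{X}_{\tilde{\*A}}$. Hence it suffices to lower bound $K := |\{\*x \in \{0,1\}^n : \*C\*x = \*C\*x_0\}|$. A column-rescaling change of variables (negate each column $j$ of $\*C$ where $(\*x_0)_j = 1$) shows that $K$ has the same distribution over uniform $\*C \in \mZ_q^{\ell \times n}$ as $K_0 := |\{\*x \in \{0,1\}^n : \*C\*x = 0\}|$, so I may analyze $K_0$ instead.

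For the second-moment analysis, I would use that $q$ is prime: every nonzero $\*x \in \{0,1\}^n$ has at least one coordinate equal to $1$, so $\*C\*x$ is uniform over $\mZ_q^\ell$ and $\Pr[\*C\*x = 0] = q^{-\ell}$; hence $\mathbb{E}[K_0] = 1 + (2^n-1)q^{-\ell}$. Moreover, distinct nonzero vectors $\*x,\*x' \in \{0,1\}^n$ are linearly independent over $\mZ_q$ (if $\*x' = \alpha\*x$ with $\alpha \in \mZ_q^\times$ then the nonzero coordinates on both sides force $\alpha = 1$), so $\Pr[\*C\*x = \*C\*x' = 0] = q^{-2\ell}$. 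A brief computation then yields $\mathrm{Var}(K_0) = (2^n-1)(1 - q^{-\ell})q^{-\ell} \leq \mathbb{E}[K_0]$. Chebyshev's inequality gives
\[
    \Pr\Big[K_0 < \tfrac{1}{2}\mathbb{E}[K_0]\Big] \;\leq\; \frac{4\,\mathrm{Var}(K_0)}{\mathbb{E}[K_0]^2} \;\leq\; \frac{4\,q^\ell}{2^n - 1}\;,
\]
which is negligible in $\lambda$ by \ref{a1} (interpreting $n = \Omega(\ell\log q)$ so that $2^n/q^\ell$ is superpolynomial, consistent with its use throughout the paper). Combining with Step~1 yields $|\mathcal{X}_{\tilde{\*A}} \cap \{0,1\}^n| \geq (2^n - 1)/(2q^\ell)$ with probability $1 - \negl(\lambda)$.

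I expect the main subtle point to be Step~1: one must ensure that a ``low-error'' reference preimage $\*x_0$ exists so that adding the $n B_V\sqrt{m}$ slack from $\*F$ still keeps the total error under $B_P\sqrt{m}$. For the completeness application inside Lemma~\ref{lem:q-compl} this is immediate from the verifier's honest sampling of $\*u = \tilde{\*A}\*s + \*e$; for the soundness application one would condition on the prover's successful preimage test, which supplies the needed $\*x_0$. Once that reference is in hand, the rest of the argument is an elementary kernel-counting estimate that does not exploit any cryptographic assumption beyond the purely information-theoretic structure of the lossy matrix.
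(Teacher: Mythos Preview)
Your proof is correct and follows essentially the same route as the paper: both arguments use the decomposition $\tilde{\*A}=\*B\*C+\*F$, reduce to counting binary vectors in a coset of $\ker(\*C)$, invoke the linear independence of distinct nonzero binary vectors over $\mZ_q$ to get pairwise independence of the indicators, and conclude via Chebyshev with the identical bound $4q^\ell/(2^n-1)$. Your write-up is in fact more explicit than the paper's on two points the paper glosses over---the noise estimate showing that every binary element of the coset $\*x_0+\ker(\*C)$ actually lands in $\mathcal{X}_{\tilde{\*A}}$ (using $B_P/B_V$ superpolynomial), and the column-negation change of variables reducing the coset count to a kernel count---but these are refinements of the same argument rather than a different approach.
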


\begin{proof}
Fix $\*u\in \mZ_q^n$. Recall from Definition~\ref{def:lossy} that $\tilde{\*A}$ can be written as $\tilde{\*A} = \*B\*C + \*E$, for $\*C\in \mZ_q^{\ell\times n}$. Since $\*C$ is selected uniformly at random, the probability that a fixed nonzero binary string maps to $\*u$ is exactly $\frac{1}{q^{\ell}}$. For $s\in \{0,1\}^n$ let $X_s$ be the indicator that $\*Cs=\*u$, so for $s\neq 0^n$, $\Es{}[X_s]=q^{-\ell}$. By linearity of expectation
\[ \Es{\tilde{\*A}} \big[ |\mathcal{X}_{\tilde{\*A}}\cap\Z_2^n|\big] \,=\, \sum_s \Es{\tilde{\*A}}\,[X_s] \,=\,\frac{2^n-1}{q^\ell}\;.\] 
Observe that $\{X_s\}_{s\in\{0,1\}^n}$ are pairwise independent, because two binary strings $s\neq s'$ are always linearly independent in $\mZ_q^n$. By applying Chebyshev's inequality we obtain
\begin{eqnarray}
    \Pr_{\tilde{\*A}}\Big(\big|\mathcal{X}_{\tilde{\*A}}\cap\Z_2^n\big| \,\leq\, \frac{2^n-1}{2q^\ell} \Big) \,\leq \,\frac{4q^\ell}{2^n-1}\;.
\end{eqnarray}
This probability is negligible as long as $n$ is sufficiently larger than $\ell \log q$, which is guaranteed by assumption~\ref{a1}.
\end{proof}

We now prove Lemma~\ref{lem:q-compl}.

\begin{proof}[Proof of Lemma~\ref{lem:q-compl}]
We use exactly the same prover as the one described in~\cite[Section 5.3]{oneproverrandomness}. For a generation round, the honest prover's state immediately after reporting $\*y$ in step 3 of Protocol \ref{prot:lossyrandomness} is within negligible trace distance of the state
\[|\psi\rangle \,=\,  \sum_{b\in\{0,1\}} \sum_{\*x_{b}\in\mathcal{X}_{\tilde{\*A},b,\*y}} \; \frac{1}{\sqrt{2|\mathcal{X}_{\tilde{\*A},b,\*y}|}} \ket{b}\ket{\*x_{b}}\;,\]
where $\mathcal{X}_{\tilde{\*A},b,\*y}$ denotes the set of valid preimages of $\*y$ under $\tilde{\*A}$, i.e.\ the set of all $\*x_{b,y}\in\Z_q^n$ such that $\| \*y - \tilde{\*A}\*x_{b,\*y} - b\, \*u\|\leq B_P\sqrt{m}$. This is because by definition the honest prover computes $\*y$ as $\*y=\tilde{\*A}\*x+b\, \*u + \*e$ for some vector $\*e$ such that $\|\*e\|\leq B_P\sqrt{m}$ (see Section~\ref{sec:provercharacterization}). 
Using Claim~\ref{claim:size-0}, with high probability over the choice of $\tilde{\*A}$ it holds that $|\mathcal{X}_{\tilde{\*A},b,\*y}|\geq 2^n/(2q^\ell)$. Once the preimage extraction protocol is applied in step 3(b) of Protocol \ref{prot:lossyrandomness}, the fact that the prover passes the test round with probability negligibly close to $1$ follows from the analysis of the honest strategy in~\cite{oneproverrandomness}. 
\end{proof}

\subsubsection{Soundness}

We will show that for any prover in Protocol \ref{prot:lossyrandomness}, as long as the prover succeeds with probability sufficiently high then 
the output distribution on pairs $(b,\*x)$ obtained in a generation round of Protocol \ref{prot:lossyrandomness} has high (smooth) min-entropy, even conditioned on all the information exchanged in the first round of the protocol, i.e.\ $(\tilde{\*A},\*u)$ and $\*y$. 

\begin{thm}\label{thm:lossyhighentropy}
Consider a quantum polynomial-time prover $\Prov$ in Protocol~\ref{prot:lossyrandomness} who passes the generation round with probability negligibly close to $1$ and the test round with probability $\frac{1}{2} + p_T$, for some $0< p_T\leq \frac{1}{2}$, and make the $\qlwe$ assumption. 
Let $(\tilde{A},U,Y,C,B,X)$ be a random transcript for an execution of the protocol. Then 
\begin{equation}\label{eq:min-bound}
 H_\infty^\varepsilon(BX|\tilde{A}UY, C=G) \,\geq\, n-\ell\log q -O(1)\;,
\end{equation}
where $\eps = 2(1-4p_T^2)^{1/4}+\negl(\lambda)$. 
\end{thm}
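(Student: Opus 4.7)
The plan is to show that the prover's post-$\*y$ state on the preimage register must be close to a uniform superposition over the $\sim 2^n/q^\ell$ binary preimages of $\*y$, which would force the standard-basis measurement in a generation round to sample $(B,X)$ with nearly maximal entropy. Using the prover characterization from Section~\ref{sec:provercharacterization}, I would condition on $(\tilde{\*A},\*u,\*y)$ together with passing the preimage test, and write the prover's preimage-register state as
\[ |\psi\rangle \,=\, \sum_{b\in\{0,1\}} \sum_{\*x\in\mathcal{X}_{\tilde{\*A},b,\*y}} \alpha_{b,\*x}\ket{b}\ket{\*x} \,+\, |\psi^\perp\rangle\;, \]
where $\mathcal{X}_{\tilde{\*A},b,\*y}$ is the set of valid preimages (as in Lemma~\ref{lem:q-compl}) and the norm of $|\psi^\perp\rangle$ is negligible since the prover passes the preimage test with probability negligibly close to $1$.

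Next I would combine the soundness analysis of the original qubit certification test with the $\qlwe$-based indistinguishability of $\*A$ and $\tilde{\*A}$ (Theorem~\ref{thm:lossy}). In Protocol~\ref{prot:randomness}, the analysis of~\cite{oneproverrandomness} shows that passing the test round with probability $\frac{1}{2}+p_T$ forces the prover's post-$\*y$ state to be within trace distance $O((1-4p_T^2)^{1/4})$ of the balanced pair $\frac{1}{\sqrt{2}}(\ket{0}\ket{\*x_0}+\ket{1}\ket{\*x_0-s})$ determined by the trapdoor. Crucially, the entire verification of Protocol~\ref{prot:lossyrandomness}, including the preimage extraction of Protocol~\ref{prot:extraction}, can be carried out by a QPT algorithm given $s$; the same is true of Protocol~\ref{prot:randomness}'s trapdoor-based test when $s$ is known. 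Hence if the post-extraction state in the lossy setting were noticeably farther from a balanced pair than in the uniform setting, one would obtain a QPT distinguisher between $\*A$ and $\tilde{\*A}$, contradicting Theorem~\ref{thm:lossy}. This transfers the balanced-pair conclusion to Protocol~\ref{prot:lossyrandomness}: for the $\*x_0$ associated with $s$, $|\alpha_{0,\*x_0}|^2 \approx |\alpha_{1,\*x_0-s}|^2$ with error $\eps_0 \sim (1-4p_T^2)^{1/4}+\negl(\lambda)$.

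The conceptual heart of the argument is then the information-theoretic hiding of $s$ given $(\tilde{\*A},\*u)$. Since $\tilde{\*A}=\*B\*C+\*F$ with $\*C$ having a large kernel, the posterior distribution of $s$ given $(\tilde{\*A},\*u)$ has min-entropy $\geq n-\ell\log q-O(1)$ with overwhelming probability over the sampling of $\tilde{\*A}$ (the counting underlying this is precisely Claim~\ref{claim:size-0}). Critically, $|\psi\rangle$ is a function of $(\tilde{\*A},\*u,\*y)$ alone and does not depend on $s$; only the verifier's extraction step does. Therefore the approximate amplitude-balance $|\alpha_{0,\*x_0}|^2 \approx |\alpha_{1,\*x_0-s}|^2$ must hold \emph{simultaneously} for every $s$ in the support of this posterior. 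Regarding these equalities as edges on a graph whose vertices are preimages, the graph is well-mixing under the posterior-induced random walk, which forces all amplitudes in a connected component to be jointly close to the uniform value $\sqrt{q^\ell/2^n}$. Measuring $|\psi\rangle$ in the standard basis (the generation round) then gives $(B,X)$ with $|\alpha_{b,\*x}|^2 \lesssim 2q^\ell/2^n$ except on a set of total weight at most $\eps=2(1-4p_T^2)^{1/4}+\negl(\lambda)$, yielding the claimed smooth min-entropy bound in~(\ref{eq:min-bound}).

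The main obstacle I expect is the last averaging step: turning the pointwise near-balance assertion for each individual $s$ into a joint near-uniformity of the amplitudes $\{\alpha_{b,\*x}\}$ with the prescribed $\eps$-smoothing. This requires a quantitative mixing-type argument for the amplitude-equality graph together with careful accounting of how trace-distance errors introduced by the quantum-verifier extraction propagate back to the pre-extraction amplitudes, and of how the errors arising from the computational indistinguishability step aggregate across the many values of $s$ that must be considered.
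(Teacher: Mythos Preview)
Your overall architecture is right: write the post-$\*y$ state as a superposition over valid preimages, extract a balance relation $p_{0,\*x_0}\approx p_{1,\*x_0-s}$ from the equation test, observe that the state is a function of $(\tilde{\*A},\*u,\*y)$ only so the relation holds for every $s'$ in the posterior, and combine with the high posterior min-entropy of $s$. Two points deserve comment.

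First, a minor difference in how the balance relation is obtained. The paper does not transfer a trace-distance conclusion from Protocol~\ref{prot:randomness} via indistinguishability of $\*A$ and $\tilde{\*A}$. Instead it works directly in the lossy protocol: after the extraction of Protocol~\ref{prot:extraction} the state collapses to a two-term superposition, and a simple overlap estimate (Claim~\ref{claim:overlap}) shows that measuring $(b,\*x_b)$ before the equation measurement costs at most $2\sum_{\*x_0}\sqrt{p_{0,\*x_0}p_{1,\*x_0-s}}$ in success probability. The adaptive hardcore bit property then forces this post-measurement success to be $\leq \tfrac12+\negl(\lambda)$, giving $\Es{}\big[\sum_{\*x_0}\sqrt{p_{0,\*x_0}p_{1,\*x_0-s}}\big]\geq p_T-\negl(\lambda)$, and Cauchy--Schwarz converts this to the $L^1$ bound $\Es{}\big[\sum_{\*x_0}|p_{0,\*x_0}-p_{1,\*x_0-s}|\big]\leq \sqrt{1-4p_T^2}+\negl(\lambda)$. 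So the balance error at this stage is $\sqrt{1-4p_T^2}$, not $(1-4p_T^2)^{1/4}$; the fourth root enters only at the very end via Markov.

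Second, and this is where your proposal has a real gap, the ``graph mixing'' step you describe is both unnecessary and problematic as stated. You correctly flag error accumulation across many $s'$ as the obstacle, and indeed a mixing argument on the matching-union graph would have to control how $\sim 2^n/q^\ell$ approximate equalities compose. The paper sidesteps this entirely with a one-line averaging trick: since the $L^1$ balance bound holds for the actual $s$ and the $p_{b,\*x}$ do not depend on $s$, it holds verbatim when $s$ is replaced by $s'$ drawn from the posterior given $(\tilde{\*A},\*u)$. Averaging over $s'$ and applying Jensen,
\[
\Es{(\tilde{\*A},\*u,\*y)}\Big[\sum_{\*x_0}\big|p_{0,\*x_0}-\Es{s'}[p_{1,\*x_0-s'}]\big|\Big]\,\leq\,\sqrt{1-4p_T^2}+\negl(\lambda)\;.
\]
Now the crucial observation is that $\Es{s'}[p_{1,\*x_0-s'}]\leq \max_{s'}\Pr(s'\mid\tilde{\*A},\*u)\cdot\sum_{\*x_1}p_{1,\*x_1}\leq 3q^\ell/(2^n-1)$ by Claim~\ref{claim:size}, with no graph analysis at all. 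Thus each $p_{0,\*x_0}$ is directly compared to a quantity that is already $O(q^\ell/2^n)$, and a single Markov step yields the smooth min-entropy bound with $\eps=2(1-4p_T^2)^{1/4}+\negl(\lambda)$. Replacing your mixing argument with this averaging is the missing idea.
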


We note that the assumption that the prover succeeds in the generation round perfectly is essentially without loss of generality; see the proof of Theorem~\ref{thm:43-randomness} in Section~\ref{sec:analysis-1}. To prove the theorem we need the following refinement of Claim~\ref{claim:size-0}.

\begin{claim}\label{claim:size}
Let $\*u\in\mZ_q^n$. Then, with all but negligible probability over the choice of $\tilde{\*A}\in\mZ_q^{m\times n}$, 
\begin{eqnarray}
   \Pr_{\tilde{\*A}\leftarrow\lossy(1^n,1^m,1^\ell,q,D_{\mZ_q^m,B_V})} \Big( \max_{s\in\mZ_2^n}\Pr(s|\tilde{\*A}, \*u) > \frac{3q^\ell}{2^n-1} \Big) \,=\,\negl(\lambda)\;.
\end{eqnarray}
Here $\Pr(s|\tilde{\*A}, \*u) $ denotes the posterior probability of $s\in \{0,1\}^n$ having been sampled by the verifier at step 1 of Protocol~\ref{prot:lossyrandomness}, conditioned on $(\tilde{\*A},\*u)$ having been sent to the prover. 
\end{claim}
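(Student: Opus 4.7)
The plan is to combine a Bayes-rule decomposition of the posterior with a Chebyshev argument almost identical to the one used for Claim~\ref{claim:size-0}, promoting its cardinality bound to a lower bound on a Gaussian-weighted sum. Since the prior on $s$ is uniform on $\{0,1\}^n$ and $\tilde{\*A}$ is independent of $s$, Bayes gives
\[
 \Pr(s \mid \tilde{\*A}, \*u) \,=\, \frac{D_{\mZ_q^m,B_V}(\*u - \tilde{\*A}s)}{W(\tilde{\*A}, \*u)}\;, \qquad W(\tilde{\*A}, \*u) \,:=\, \sum_{s'\in\{0,1\}^n} D_{\mZ_q^m,B_V}(\*u - \tilde{\*A}s')\;.
\]
The numerator is bounded by the peak density $D_{\mZ_q^m,B_V}(0)=1/Z$, so the task reduces to showing $W(\tilde{\*A},\*u) \geq \Omega((2^n-1)/(q^\ell Z))$ with overwhelming probability over the joint choice of $\tilde{\*A}$, $s_{\text{true}}$ and $\*e$ producing $\*u = \tilde{\*A} s_{\text{true}} + \*e$.

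To lower bound $W$, I would decompose $\tilde{\*A} = \*B\*C + \*F$ as in Definition~\ref{def:lossy} and restrict the sum to the ``kernel coset''
\[
  T \,:=\, \{s' \in \{0,1\}^n : \*C(s' - s_{\text{true}}) = 0\}\;.
\]
For any $s' \in T$ one has $\*u - \tilde{\*A} s' = \*e + \*F(s_{\text{true}} - s')$. Since $s_{\text{true}} - s' \in \{-1,0,1\}^n$ has Euclidean norm at most $\sqrt{n}$, a standard concentration bound on the operator norm of a discrete-Gaussian matrix yields $\|\*F\|_{\mathrm{op}} = O(B_L\sqrt{m})$ with overwhelming probability. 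The super-polynomial gap $B_V/B_L$ in assumption~\ref{a5} then makes $\|\*F(s_{\text{true}} - s')\| = O(B_L\sqrt{mn}) = \negl(\lambda)\cdot B_V\sqrt{m}$, so $\|\*u - \tilde{\*A}s'\|$ is within a $(1+o(1))$ factor of $\|\*e\|$ and $D_{\mZ_q^m,B_V}(\*u - \tilde{\*A}s') \geq \Omega(1/Z)$. Hence $W(\tilde{\*A},\*u) \geq \Omega(|T|/Z)$ and it suffices to show $|T| \geq (2^n - 1)/(2 q^\ell)$ with overwhelming probability.

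The size of $T$ is controlled by exactly the second-moment argument from the proof of Claim~\ref{claim:size-0}. For each $s' \neq s_{\text{true}}$ in $\{0,1\}^n$ the difference $s'-s_{\text{true}}$ is a nonzero element of $\mZ_q^n$, so for $\*C \leftarrow_U \mZ_q^{\ell \times n}$ the indicator $X_{s'} = \mathbbm{1}[\*C(s' - s_{\text{true}}) = 0]$ has mean $q^{-\ell}$. Pairwise independence of the $X_{s'}$ reduces to checking that for any distinct $s'_1, s'_2 \in \{0,1\}^n \setminus \{s_{\text{true}}\}$, the vectors $s'_1 - s_{\text{true}}$ and $s'_2 - s_{\text{true}}$ are linearly independent over $\mZ_q$; this is a short case analysis using that entries lie in $\{-1,0,1\}$ and that any coordinate on which $s'_1$ and $s'_2$ differ forces one coefficient of a purported linear combination to vanish. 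Applying Chebyshev to $|T|-1 = \sum_{s'\neq s_{\text{true}}} X_{s'}$, whose mean and variance are both at most $(2^n-1)/q^\ell$, gives $|T|-1 \geq (2^n-1)/(2 q^\ell)$ with failure probability at most $4 q^\ell/(2^n-1) = \negl(\lambda)$ by assumption~\ref{a1}.

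The most delicate step will be the operator-norm control of $\*F$: the argument crucially needs $\|\*F(s_{\text{true}}-s')\|$ to be dominated by $B_V\sqrt{m}$ over binary shifts, which is available precisely because of the super-polynomial separation between $B_V$ and the lossy error parameter in assumption~\ref{a5}. Everything else is essentially a Gaussian-weighted rerun of the counting argument already executed in Claim~\ref{claim:size-0}.
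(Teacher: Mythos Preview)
There is a genuine gap at the step where you conclude ``$D_{\mZ_q^m,B_V}(\*u - \tilde{\*A}s') \geq \Omega(1/Z)$.'' All you have established is that $\|\*u-\tilde{\*A}s'\|$ is within a $(1+o(1))$ factor of $\|\*e\|$, which gives $D_{\mZ_q^m,B_V}(\*u-\tilde{\*A}s')\approx D_{\mZ_q^m,B_V}(\*e)$, \emph{not} $\approx 1/Z=D_{\mZ_q^m,B_V}(0)$. Since $\*e\leftarrow D_{\mZ_q^m,B_V}$ has $\|\*e\|=\Theta(B_V\sqrt{m})$ with overwhelming probability, the ratio $D(\*e)/D(0)=e^{-\pi\|\*e\|^2/B_V^2}=e^{-\Theta(m)}$ is exponentially small. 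Thus your lower bound on $W$ falls short of the target $(2^n-1)/(q^\ell Z)$ by a factor of $e^{\Theta(m)}$, and the argument does not close.

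The paper avoids this loss by centering the comparison at the \emph{maximizing} secret rather than at the true one: let $s_0$ minimize $\|\*u-\tilde{\*A}s_0\|$ and set $\*e_0=\*u-\tilde{\*A}s_0$. Then the numerator of the posterior at its maximum is exactly $D(\*e_0)$, and restricting the denominator sum to the coset $S=\{s:\*Cs=\*Cs_0\}$ gives terms $D(\*e_0+\*E(s_0-s))$ with $\|\*E(s_0-s)\|\leq 2B_L\sqrt{nm}$. A direct Gaussian ratio bound yields $D(\*e_0+\*E(s_0-s))\geq D(\*e_0)\,e^{-O(mnB_L/B_V)}$, so the factors of $D(\*e_0)$ cancel and one obtains $\max_s\Pr(s\mid\tilde{\*A},\*u)\leq |S|^{-1}(1+\negl(\lambda))$ using only~\ref{a5}. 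The size of $|S|$ is then handled by the Chebyshev argument you already outlined. The essential point your sketch misses is that the numerator must be matched to the same Gaussian scale as the denominator terms; bounding it by the peak density $1/Z$ is too crude.
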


\begin{proof}
Recall from Definition~\ref{def:lossy} that a lossy matrix $\tilde{\*A}$ is equal to $\*B\*C + \*E$, for $\*B, \*C$ sampled uniformly from $\mZ_q^{m\times n}$ and $\mZ_q^{\ell\times n}$ respectively and $\*E\leftarrow D_{\mZ_q^{m\times n}, B_L}$. From Claim~\ref{claim:size-0} we already know that for any given $\*u\in \mZ_q^m$, with high probability over $\tilde{\*A}$, there are many possible  strings $s$ that could have led to $(\tilde{\*A},\*u)$. However, the posterior probability of each such $s$ is not identical, because different error vectors may have slightly different probabilities. To quantify this, define $D_{\tilde{\*A},\*u}$ to be the distribution on $\{0,1\}^n$ with probabilities 
\begin{eqnarray}\label{eq:def-DD}
    D_{\tilde{\*A},\*u}(s) \,=\, \frac{D_{\mZ_q^m,B_V}(\*u-\tilde{\*A}s)}{\sum\limits_{s'\in\{0,1\}^n}D_{\mZ_q^m,B_V}(\*u-\tilde{\*A}s')}\;.
\end{eqnarray}
Let $s_{0}\in\{0,1\}^n$ be a string which minimizes $\lVert \*u - \tilde{\*A}s_0 \rVert$ (and therefore maximizes $D_{\tilde{\*A},\*u}$), and let $\*e_{0} = \*u - \tilde{\*A}s_{0}$. Consider the set $S = \{s\in\{0,1\}^n | \*Cs = \*Cs_0\}$. For $s\in S$, $\tilde{\*A}s=\*B\*Cs+\*Es=\tilde{\*A}s_0+\*E(s-s_0)$, therefore by~\eqref{eq:def-DD},
\begin{eqnarray}
    D_{\tilde{\*A},\*u}(s) &\propto& D_{\mZ_q^m,B_V}(\*E(s_{0}-s) + \*e_{0})\;,
\end{eqnarray}
from which using~\eqref{eq:def-DD} it follows that
\begin{eqnarray}\label{eq:maxprobinitial}
    D_{\tilde{\*A},\*u}(s_{0}) 
    &\leq& \frac{D_{\mZ_q^m,B_V}(\*e_{0})}{\sum\limits_{s\in S}D_{\mZ_q^m,B_V}(\*E(s_{0}-s) + \*e_{0})}\;.
\end{eqnarray}
To lower bound the denominator we use the facts that for any binary $s$ and $\*E$ in the support of $D_{\mZ_q^{m\times n},B_L}$, $\lVert \*E(s_{0}-s)\rVert \leq 2B_L\sqrt{nm}$, and $\lVert \*e_{0}\rVert \leq B_V\sqrt{m}$. Using the definition of $D_{\mZ_q^m,B_V}$ we get that for all $s\in S$,
\begin{eqnarray*}
    D_{\mZ_q^m,B_V}(\*E(s_{0}-s) + \*e_{0}) &\geq& D_{\mZ_q^m,B_V}(\*e_{0})  e^{\frac{-\pi(2\lVert \*E(s_{0}-s)\rVert\lVert \*e_{0}\rVert )}{B_V^2}}e^{\frac{-\pi(\lVert \*E(s_{0}-s)\rVert^2 )}{B_V^2}}\\
    &\geq& D_{\mZ_q^m,B_V}(\*e_{0}) e^{\frac{-\pi(4B_L\sqrt{nm}B_V\sqrt{m} )}{B_V^2}}e^{\frac{-\pi(4B_L^2nm )}{B_V^2}}\\
    &\geq& D_{\mZ_q^m,B_V}(\*e_{0}) e^{\frac{-8\pi mn B_L }{B_V}}\;.
\end{eqnarray*}
Inserting this bound in \eqref{eq:maxprobinitial}, 
\begin{eqnarray*}
    D_{\tilde{\*A},\*u}(s_{0}) 
    &\leq& \frac{D_{\mZ_q^m,B_V}(\*e_{0})}{|S| D_{\mZ_q^m,B_V}(\*e_{0}) e^{\frac{-8\pi mn B_L }{B_V}}}\\
     &=& \frac{1}{|S| }e^{\frac{8\pi mn B_L }{B_V}}
\end{eqnarray*}
Using that by~\ref{a5} the ratio $B_L/B_V$ is negligible it follows that $D_{\tilde{\*A},\*u}(s_{0})\leq |S|^{-1}(1+\negl(\lambda))$. The claim follows by applying Claim~\ref{claim:size-0} to bound the size of $|S|$.  

\end{proof}

We now prove Theorem~\ref{thm:lossyhighentropy}.

\begin{proof}[Proof of Theorem~\ref{thm:lossyhighentropy}]
Fix $(\tilde{\*A},\*u,\*y)$ obtained in the first round of an execution of Protocol~\ref{prot:lossyrandomness} with the prover $\Prov$. Since the prover is assumed to pass the generation round with probability negligibly close to $1$, the state of the prover immediately prior to step 4 of the protocol is within negligible trace distance of a state of the form 
\begin{equation}\label{eq:lossyinitialstate}
    \sum_{b\in\{0,1\}} \sum_{\*x_{b}\in\mathcal{X}_{\tilde{\*A},b,\*y}} \; \sqrt{p_{b,\*x_{b}}}\ket{b}\ket{\*x_{b}}\ket{\psi_{b,\*x_{b}}}\;,
\end{equation}
where for $b\in\{0,1\}$, $\mathcal{X}_{\tilde{\*A},b,\*y}$ denotes the set of valid preimages of $\*y$, i.e.\ 
\[\mathcal{X}_{\tilde{\*A},b,\*y} \,=\, \big\{\*x_{b,\*y} \,|\, \lVert \*y - \tilde{A}\*x_{b,\*y} - b\, \*u\rVert\leq B_P\sqrt{m}\big\}\;\,\]
and the $\ket{\psi_{b,\*x_{b}}}$ are arbitrary normalized states. (We leave the dependence of $p_{b,\*x_b}$ and $\ket{\psi_{b,\*x_b}}$ on $\tilde{\*A}$, $\*u$ and $\*y$ implicit.) 
 We now show the following claim. 

\begin{claim}\label{claim:negl}
There exists a negligible function $\mu$ such that on average over the distribution of $(\tilde{\*A},\*u,\*y)$,
\begin{equation}\label{eq:firstaveragedifference}
\Es{(\tilde{\*A},\*u,\*y)}\Big[\sum_{\*x_{0}\in\mathcal{X}_{\tilde{\*A},0,\*y}}\big|p_{0,\*x_{0}}-p_{1,\*x_{0} - s}\big|\Big] \,\leq\, \sqrt{1-4p_T^2} + \mu\;,
\end{equation}
where $s$ is as chosen by the verifier in the first step of the protocol.
\end{claim}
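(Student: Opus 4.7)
Fix a first-round transcript $(\tilde{\*A},\*u,\*y)$. Since the prover passes the generation round with probability negligibly close to $1$, the state of the prover's preimage register is within negligible trace distance of the form given in~\eqref{eq:lossyinitialstate}. Applying the preimage-extraction Protocol~\ref{prot:extraction} to this state, the outcome $\*x_0$ is obtained with probability $q_{\*x_0}=p_{0,\*x_0}+p_{1,\*x_0-s}$ (with the convention $p_{b,\*x_b}=0$ when $\*x_b\notin\mathcal{X}_{\tilde{\*A},b,\*y}$), collapsing the prover's register to the two-dimensional state
\[
|\chi_{\*x_0}\rangle \,=\, \frac{1}{\sqrt{q_{\*x_0}}}\Bigl(\sqrt{p_{0,\*x_0}}\ket{0}\ket{\*x_0}\ket{\psi_{0,\*x_0}}+\sqrt{p_{1,\*x_0-s}}\ket{1}\ket{\*x_0-s}\ket{\psi_{1,\*x_0-s}}\Bigr)\;.
\]

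Next I would upper bound the conditional probability that the prover passes the equation test given extraction outcome $\*x_0$. Setting $\alpha_{\*x_0}=\sqrt{p_{0,\*x_0}/q_{\*x_0}}$ and $\beta_{\*x_0}=\sqrt{p_{1,\*x_0-s}/q_{\*x_0}}$, the standard ``claw-state'' equation-test analysis from~\cite{oneproverrandomness} gives a conditional winning probability of at most $\tfrac{1}{2}\bigl(1+2\alpha_{\*x_0}\beta_{\*x_0}|\langle\psi_{1,\*x_0-s}|\psi_{0,\*x_0}\rangle|\bigr)+O(2^{-w})$, where the $O(2^{-w})$ comes from excluding $d=0$. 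Weighting by $q_{\*x_0}$, summing over $\*x_0$, and using $|\langle\psi_{1,\*x_0-s}|\psi_{0,\*x_0}\rangle|\leq 1$ gives the per-transcript bound
\[
P_T^{(\tilde{\*A},\*u,\*y)}\,\leq\,\tfrac{1}{2}+\sum_{\*x_0}\sqrt{p_{0,\*x_0}\,p_{1,\*x_0-s}}+\negl(\lambda)\;,
\]
where $P_T^{(\tilde{\*A},\*u,\*y)}$ denotes the conditional test-round success given the first-round transcript.

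To convert this into a bound on the balance, set $s_{\*x_0}=p_{0,\*x_0}+p_{1,\*x_0-s}$, observing that $\sum_{\*x_0}s_{\*x_0}=1$ by the generation-round success and the obvious change of variables. Combining the identity $4p_{0,\*x_0}\,p_{1,\*x_0-s}=s_{\*x_0}^2-(p_{0,\*x_0}-p_{1,\*x_0-s})^2$ with Jensen's inequality applied to the concave function $t\mapsto\sqrt{1-t^2}$ on $[0,1]$ with weights $s_{\*x_0}$ rearranges to the per-transcript inequality
\[
\sum_{\*x_0}|p_{0,\*x_0}-p_{1,\*x_0-s}|\,\leq\,\sqrt{1-\bigl(2P_T^{(\tilde{\*A},\*u,\*y)}-1\bigr)^2}+\negl(\lambda)\;.
\]
Taking the expectation over $(\tilde{\*A},\*u,\*y)$ and applying Jensen's inequality once more --- now using concavity of $x\mapsto\sqrt{1-4x^2}$ on $[0,1/2]$ together with $\Es{(\tilde{\*A},\*u,\*y)}[P_T^{(\tilde{\*A},\*u,\*y)}-\tfrac{1}{2}]=p_T$ --- yields the claimed bound, with all negligible contributions aggregated into a single $\mu=\negl(\lambda)$.

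The principal technical obstacle is the upper bound on the conditional equation-test winning probability used in the second paragraph. The prover's test-round unitary need not be Hadamard and is the same across different extraction outcomes $\*x_0$, so the bound $\tfrac{1}{2}+\alpha_{\*x_0}\beta_{\*x_0}|\langle\psi_{1,\*x_0-s}|\psi_{0,\*x_0}\rangle|$ must hold uniformly in $\*x_0$. This is a standard coherence-bounds-distinguishability statement on the two-dimensional subspace spanned by $\ket{0}\ket{\*x_0}\ket{\psi_{0,\*x_0}}$ and $\ket{1}\ket{\*x_0-s}\ket{\psi_{1,\*x_0-s}}$: for fixed $\delta_{\*x_0}=J(\*x_0)\oplus J(\*x_0-s)$ the winning projector $\sum_{d\neq 0,\,c=d\cdot\delta_{\*x_0}}\ket{c,d}\bra{c,d}$ has the same rank as its complement on the preimage register, so the bound follows from the fact that the reduced density matrix of $|\chi_{\*x_0}\rangle$ on the preimage register has off-diagonal magnitude exactly $\alpha_{\*x_0}\beta_{\*x_0}|\langle\psi_{1,\*x_0-s}|\psi_{0,\*x_0}\rangle|$, together with the elementary fact that no purification-assisted measurement can exceed this coherence in advantage over $\tfrac{1}{2}$.
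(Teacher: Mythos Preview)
There is a genuine gap at the step you flag as the ``principal technical obstacle.'' The bound
\[
P_T^{(\*x_0)}\;\leq\;\tfrac{1}{2}\bigl(1+2\alpha_{\*x_0}\beta_{\*x_0}|\langle\psi_{1,\*x_0-s}|\psi_{0,\*x_0}\rangle|\bigr)
\]
is \emph{not} an information-theoretic fact, and neither the rank-balance of the winning projector nor the off-diagonal of the reduced state on the preimage register implies it. Expand the success probability on $\ket{\chi_{\*x_0}}$: with $\ket{\phi_b}=\ket{b,\*x_b,\psi_{b,\*x_b}}$ and $\Pi=U^\dagger(\Pi_{\*x_0}\otimes I_E)U$,
\[
\langle\chi_{\*x_0}|\Pi|\chi_{\*x_0}\rangle
\,=\,\alpha_{\*x_0}^2\langle\phi_0|\Pi|\phi_0\rangle+\beta_{\*x_0}^2\langle\phi_1|\Pi|\phi_1\rangle+2\alpha_{\*x_0}\beta_{\*x_0}\,\Re\langle\phi_0|\Pi|\phi_1\rangle\;.
\]
The cross term is indeed at most $\alpha_{\*x_0}\beta_{\*x_0}$, but nothing in your argument bounds the diagonal terms $\langle\phi_b|\Pi|\phi_b\rangle$ by $\tfrac{1}{2}$. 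The prover holds the full purification, so if the ancilla $\ket{\psi_{0,\*x_0}}$ happened to encode $\delta_{\*x_0}$ (equivalently, information about $s$), then from the single branch $\ket{\phi_0}$ alone the prover could output a correct equation with probability $1$, making the conditional success probability $1$ regardless of $\alpha_{\*x_0}\beta_{\*x_0}$. The ``coherence bounds advantage'' intuition applies when one is distinguishing two fixed states with a measurement that does \emph{not} access the purifying system; here the prover acts on the purification, so that intuition does not apply.

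What rules out this possibility is not a structural fact about the state or the projector but the $\qlwe$ assumption, via the \emph{adaptive hardcore bit} property. The paper's proof runs the inequality in the opposite direction: the elementary bound (Claim~\ref{claim:overlap}) gives, per $\*x_0$,
\[
\alpha_{\*x_0}^2\langle\phi_0|\Pi|\phi_0\rangle+\beta_{\*x_0}^2\langle\phi_1|\Pi|\phi_1\rangle\;\geq\;\langle\chi_{\*x_0}|\Pi|\chi_{\*x_0}\rangle-2\alpha_{\*x_0}\beta_{\*x_0}\;,
\]
which after weighting by $q_{\*x_0}$ and averaging says that the probability of passing the equation test \emph{after first measuring a single preimage} is at least $\tfrac{1}{2}+p_T-\Es{}\bigl[\sum_{\*x_0}\sqrt{p_{0,\*x_0}p_{1,\*x_0-s}}\bigr]$. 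The adaptive hardcore bit property then upper-bounds this ``measure-then-equation'' probability by $\tfrac{1}{2}+\negl(\lambda)$, because an efficient adversary holding both a preimage and a valid equation would violate $\qlwe$. Combining the two yields $\Es{}\bigl[\sum_{\*x_0}\sqrt{p_{0,\*x_0}p_{1,\*x_0-s}}\bigr]\geq p_T-\negl(\lambda)$, after which your Jensen manipulation (equivalently, the paper's Cauchy--Schwarz step) finishes the proof. So the fix is to replace your asserted per-$\*x_0$ upper bound by this computational lower bound on the cross-term sum; the remainder of your argument is correct and matches the paper.
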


\begin{proof}
Consider the state of the prover after the preimage extraction protocol that would be performed in step 3(b) of Protocol~\ref{prot:lossyrandomness}. Let $\*x_{0}$ denote the measurement result of the extraction. Based on~\eqref{eq:lossyinitialstate} and the definition of the preimage extraction protocol, Protocol~\ref{prot:extraction}, conditioned on $\*x_0$ being obtained the state of the prover has  collapsed to
\begin{equation}\label{eq:lossycollapsedstate}
    \frac{1}{\sqrt{p_{0,\*x_{0}} + p_{1,\*x_{1}}}}\sum_{b\in\{0,1\}} \sqrt{p_{b,\*x_{b}}}\ket{b}\ket{\*x_{b}}\ket{\psi_{b,\*x_{b}}}\;,
\end{equation}
where $\*x_{1} = \*x_{0} - s$.
By assumption the probability (over all previous outcomes obtained in the protocol) of obtaining a correct equation using the prover's measurement is $\frac{1}{2} + p_T$. Suppose now that $(b,\*x_b)$ is measured prior to performing the prover's measurement. We will use the following claim.

\begin{claim}\label{claim:overlap}
Let $\ket{\psi} = \alpha\ket{0} + \beta\ket{1}$ be normalized and $H$ Hermitian such that $\|H\|\leq 1$. Then 
\[ |\alpha|^2 \bra{0}H\ket{0} + |\beta|^2 \bra{1}H\ket{1} \,\geq\, \bra{\psi} H \ket{\psi} - 2|\alpha||\beta|\;.\]
\end{claim}

\begin{proof}
By direct calculation,
\begin{align*}
\bra{\psi} H \ket{\psi} &= |\alpha|^2 \bra{0}H\ket{0} + |\beta|^2 \bra{1}H\ket{1} + 2\Re\big( \alpha\overline{\beta} \bra{0}H\ket{1}\big)\\
&\leq |\alpha|^2 \bra{0}H\ket{0} + |\beta|^2 \bra{1}H\ket{1} + 2|\alpha||\beta|\;.
\end{align*}
\end{proof}

Applying Claim~\ref{claim:overlap}, using $H$ the observable that combines the prover's equation measurement with the check that the equation is correct, it follows  that the probability of obtaining a correct equation when $(b,\*x_b)$ is measured {prior} to performing the provers' equation measurement (summed over all possible $(b,\*x_b$) is at least
\begin{equation} 
\frac{1}{2} + p_T - \Es{(\tilde{\*A},\*u,\*y)}\Big[\sum_{\*x_{0}\in\mathcal{X}_{\tilde{\*A},0,\*y}} \sqrt{p_{0,\*x_{0}}p_{1,\*x_{0}-s}}\Big]\;.
\end{equation}
It must be the case that obtaining a correct equation following a measurement of $(b,\*x_b)$ occurs with probability at most $\frac{1}{2} + \mu'$, for a negligible function $\mu'$. This is because if the probability is non-negligibly larger than $\frac{1}{2}$, prover $\Prov$ and the verifier can be turned into an efficient adversary for the adaptive hardcore bit property shown in~\cite[Section 4.4]{oneproverrandomness}. 
Therefore,
\begin{equation}\label{eq:p-1}
  \Es{(\tilde{\*A},\*u,\*y)}\Big[\sum_{x_{0}\in\mathcal{X}_{\tilde{\*A},0,\*y}} \sqrt{p_{0,\*x_{0}}p_{1,\*x_{0}-s}}\Big] \,\geq\,\frac{1}{2} + p_T - \Big(\frac{1}{2} + \mu'\Big)\;.
\end{equation}
Using the Cauchy-Schwarz inequality,
\begin{eqnarray}
  \sum_{\*x_{0}\in\mathcal{X}_{\tilde{\*A},0,\*y}} \big| p_{0,\*x_{0}} - p_{1,\*x_{0}-s}\big| &=&  \sum_{\*x_{0}\in\mathcal{X}_{\tilde{\*A},0,\*y}} \big|(\sqrt{p_{0,\*x_{0}}} - 
    \sqrt{p_{1,\*x_{0}-s}})(\sqrt{p_{0,\*x_{0}}} +
    \sqrt{p_{1,\*x_{0}-s}})\big|\notag\\
    &\leq&\Big(\sum_{\*x_{0}\in\mathcal{X}_{\tilde{\*A},0,\*y}} \big(\sqrt{p_{0,\*x_{0}}} - 
    \sqrt{p_{1,\*x_{0}-s}}\big)^2\Big)^{\frac{1}{2}}\Big(\sum_{\*x_{0}\in\mathcal{X}_{\tilde{\*A},0,\*y}} \big(\sqrt{p_{0,\*x_{0}}} + 
    \sqrt{p_{1,\*x_{0}-s}}\big)^2\Big)^{\frac{1}{2}}\notag\\
&\leq&  \Big(1 - 4  \Big(\sum_{\*x_{0}\in\mathcal{X}_{\tilde{\*A},0,\*y}} \sqrt{p_{0,\*x_{0}}p_{1,\*x_{0}-s}} \Big)^2 \Big)^{1/2}\label{eq:cauchyschwarz}
\end{eqnarray}
Combining \eqref{eq:p-1}, \eqref{eq:cauchyschwarz}, and using that $x\mapsto \sqrt{1-4x^2}$ is concave on $[0,1/2]$ proves the claim.
\end{proof}

Next we take advantage of the fact that the verifier's sample $(\tilde{\*A},\*u)$ is lossy to extend \eqref{eq:firstaveragedifference} to all $s'$ which could have resulted in the sample; in other words, all binary $s'$ for which there exists $\*e'$ (with norm at most $B_V\sqrt{m}$) such that $\tilde{\*A}s' + \*e' = \*u$. For any such $s'$, with respect to the first round of the protocol the verifier's secret may as well have been $s'$ rather than $s$. Since the $p_{0,\*x_0}$  defined from the prover's state at the end of the first round depend only on $(\tilde{\*A},\*u,\*y)$ but not directly on $s$ it follows from Claim~\ref{claim:negl} that there exists a negligible function $\mu$ such that on average over $(\tilde{\*A},\*u,\*y)$,
\begin{align}
   \mathrm{E}_{(\tilde{\*A},\*u,\*y)}\Big[ \sum_{\*x_{0}\in\mathcal{X}_{\tilde{\*A},\*0,\*y}} \abs{p_{0,\*x_{0}}-\mathrm{E}_{s'}[p_{1,\*x_{0} - s'}]} \Big]
	&\leq 
   \mathrm{E}_{(\tilde{\*A},\*u,\*y)}\Big[\mathrm{E}_{s'}\Big[\sum_{\*x_{0,\*y}\in\mathcal{X}_{\tilde{\*A},0,\*y}}\abs{p_{0,\*x_{0,\*y}}-p_{1,\*x_{0,\*y} - s'}}\Big]\Big]\notag\\
	&= 
 \mathrm{E}_{s'} \Big[ \mathrm{E}_{(\tilde{\*A},\*u,\*y)}\Big[\sum_{\*x_{0,\*y}\in\mathcal{X}_{\tilde{\*A},0,\*y}}\abs{p_{0,\*x_{0,\*y}}-p_{1,\*x_{0,\*y} - s'}}\Big]\Big]\notag\\
&\leq \sqrt{1-4p_T^2} + \mu\;,\label{eq:p-2}
\end{align}
where here in the first line the expectation on $s'$ is taken under the marginal distribution, conditioned on $(\tilde{\*A},\*u)$, and in the second line it is uniform over $\{0,1\}^n$. The first inequality is Jensen's inequality.  
Using Claim~\ref{claim:size} it follows that except with negligible probability over the choice of $\tilde{\*A}$ and $\*u$,
\begin{equation}\label{eq:p-3}
 \mathrm{E}_{s'}\big[p_{1,\*x_{0,\*y} - s'}\big] \,\leq \,\frac{3q^\ell}{2^n-1}\;.
\end{equation}
Combining~\eqref{eq:p-2} and~\eqref{eq:p-3}, by Markov's inequality with probability at least  $(1-4p_T^2)^{1/4}$ over the choice of $(\tilde{\*A},\*u,\*y)$ it holds that the distribution of $(b,\*x_b)$ obtained in the generation round, conditioned on $b=0$, is within statistical distance $(1-4p_T^2)^{1/4} + \mu'$ of a distribution with entropy at least $n-\ell\log q-O(1)$. The same reasoning applies to the distribution over preimages conditioned on $b=1$. This concludes the proof. \end{proof}

\subsection{Analysis of Lossy Randomness Protocol (Protocol \ref{prot:amplifyrandomness})}\label{sec:analysis-1}

In this section we describe how Theorem~\ref{thm:lossyhighentropy}, which applies to Protocol~\ref{prot:lossyrandomness}, 
can be extended to guarantee randomness generation for Protocol \ref{prot:amplifyrandomness}. For convenience we first restate Protocol~\ref{prot:amplifyrandomness} in the same format as Protocol~\ref{prot:lossyrandomness}, by filling in the verifier's actions in the template Protocol~\ref{prot:template}.

~\newline
\textbf{Protocol \ref{prot:amplifyrandomness}: \textit{Lossy Randomness Protocol}}
\begin{enumerate}
    \item \textit{Let $\ell,n,m,q$ be functions of $\lambda$ satisfying the conditions~\ref{a1}--\ref{a5}. Let $\chi$ be the distribution $D_{\mZ_q^m,B_V}$. The verifier chooses $C\in\{G,T\}$ uniformly at random. }
		\begin{enumerate}
		\item  \textit{If $C=G$ the verifier samples $s\leftarrow_U \{0,1\}^n$, $\tilde{\*A}\leftarrow \lossy(1^n,1^m,1^\ell,q,D_{\mZ_q^m,B_L})$ and $\*e\leftarrow \chi$. The verifier computes $\*u = \tilde{\*A}s + \*e$. They send $(\tilde{\*A},\*u)$ to the prover.}
		\item  \textit{If $C=T$ the verifier samples  $s\leftarrow_U \{0,1\}^n$, $({\*A},t_{\*A})\leftarrow \GenTrap(1^n,1^m,q)$ and $\*e\leftarrow \chi$. The verifier computes $\*u = {\*A}s + \*e$. They send $({\*A},\*u)$ to the prover.}
		\end{enumerate}
	  \item \textit{The prover reports a $\*y \in\mZ_q^m$ to the verifier.}
     \item \textit{The verifier runs either a generation round (in case $C=G$) or a test round (in case $C=T$).}
    \begin{enumerate}
        \item \textit{\textbf{Generation round:} The verifier sends $C$ to the prover and receives $(b,\*x)$ from them. They check that $\lVert \*y - \tilde{\*A}\*x - b\, \*u\rVert$ is at most $B_P\sqrt{m}$ and abort if not.  }
        \item \textit{\textbf{Test round:} The verifier sends $C$ to the prover and receives $(c,d)$ from them. They use the trapdoor $t_{\*A}$ to compute $\*x_0\leftarrow \Invert(\*A,t_{\*A},\*y)$, where $\Invert$ is the algorithm from Theorem~\ref{thm:trapdoor}. They check that $d\neq 0$ and $c = d\cdot (J(\*x_0)\oplus J(\*x_0 - s))$, where $J$ is the binary representation function, and abort if either condition is not satisfied.}
    \end{enumerate}
\end{enumerate}

We start with two preliminary claims that allow us to relate the success probability of a prover in the generation or test rounds of any of the three protocols \ref{prot:randomness}, \ref{prot:lossyrandomness}, and \ref{prot:amplifyrandomness}).

\begin{claim}\label{cl:perfectgenerationrounds}
Consider a prover who passes the generation round of either one of the three protocols (Protocols \ref{prot:randomness}, \ref{prot:lossyrandomness}, \ref{prot:amplifyrandomness}) with probability $ p_G$. Such a prover passes the generation rounds of all three protocols with probability negligibly close to $p_G$.  
\end{claim}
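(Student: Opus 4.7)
The plan is to reduce the claim to the computational indistinguishability of uniform and lossy matrices provided by Theorem~\ref{thm:lossy}. Observe first that the generation rounds of Protocol~\ref{prot:lossyrandomness} and Protocol~\ref{prot:amplifyrandomness} are syntactically identical: in both cases the verifier samples a lossy matrix $\tilde{\*A}$ together with $s\in\{0,1\}^n$ and an error vector $\*e$, sends $(\tilde{\*A},\*u=\tilde{\*A} s+\*e)$ to the prover, receives $\*y$ and then $(b,\*x)$, and checks whether $\lVert \*y-\tilde{\*A}\*x-b\,\*u\rVert\leq B_P\sqrt{m}$. In particular, the verifier's acceptance criterion depends only on the quantities sent to the prover plus $(b,\*x)$ and requires no trapdoor or hypothetical quantum access. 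Hence the success probabilities in these two protocols are exactly equal, and it remains to compare them to the probability in Protocol~\ref{prot:randomness}, where the only structural difference is that the matrix is uniform (or, strictly speaking, within negligible statistical distance of uniform by Theorem~\ref{thm:trapdoor}) rather than lossy.

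The main step is to build a QPT distinguisher $\mathcal{D}$ against uniform versus lossy matrices. On input a matrix $\*M\in\mZ_q^{m\times n}$, $\mathcal{D}$ samples $s\leftarrow_U\{0,1\}^n$ and $\*e\leftarrow\chi$ internally, forms $\*u=\*M s+\*e$, runs the prover $\Prov$ on first message $(\*M,\*u)$, records the reported $\*y$, then sends the challenge $C=G$, receives the generation-round answer $(b,\*x)$, and outputs $1$ iff $\lVert \*y-\*M\*x-b\,\*u\rVert\leq B_P\sqrt{m}$. Since $\Prov$ is QPT and the final check is an efficient classical computation, $\mathcal{D}$ is itself QPT. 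By construction, $\Pr[\mathcal{D}(\*M)=1]$ equals the generation-round success probability of $\Prov$ in Protocol~\ref{prot:randomness} when $\*M$ is drawn uniformly, and equals the generation-round success probability in Protocols~\ref{prot:lossyrandomness} and~\ref{prot:amplifyrandomness} when $\*M\leftarrow\lossy(1^n,1^m,1^\ell,q,D_{\mZ_q^m,B_L})$.

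Applying Theorem~\ref{thm:lossy} under the $\qlwe$ assumption, the distinguishing advantage of $\mathcal{D}$ is $\negl(\lambda)$, so the success probabilities in the uniform and lossy cases differ by $\negl(\lambda)$. Combining this with the negligible statistical distance between the $\GenTrap$ output and the uniform distribution on matrices (needed for Protocol~\ref{prot:randomness}) yields that all three generation-round probabilities agree up to a negligible additive error, as claimed.

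The principal (and rather mild) obstacle is ensuring that the reduction uses no auxiliary information beyond what is actually sent to the prover. This is precisely why the claim is about \emph{generation} rounds, whose acceptance check $\lVert \*y-\*M\*x-b\,\*u\rVert\leq B_P\sqrt{m}$ involves only public data; had we attempted the analogous statement for test rounds, the verifier's acceptance criterion would depend on secret data (the trapdoor $t_{\*A}$ in Protocol~\ref{prot:randomness} or the extracted preimage $\*x_0$ in Protocol~\ref{prot:lossyrandomness}), and one would have to argue separately that this data can be simulated efficiently in the reduction.
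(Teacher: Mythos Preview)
Your proof is correct and follows essentially the same approach as the paper: observe that the generation rounds of Protocols~\ref{prot:lossyrandomness} and~\ref{prot:amplifyrandomness} are identical, then build an efficient distinguisher between lossy and uniform matrices by simulating a generation round and applying the public acceptance check, invoking Theorem~\ref{thm:lossy} and Theorem~\ref{thm:trapdoor} to conclude. Your construction of the distinguisher is in fact more explicit than the paper's, which simply notes that ``estimating the prover's acceptance probability in a generation round can be done efficiently.''
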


\begin{proof}
All provers must have the same probability of passing the generation rounds of protocols \ref{prot:lossyrandomness} and \ref{prot:amplifyrandomness}, since they are identical. If this probability differed from the probability of passing the generation round of protocol \ref{prot:randomness}, the prover could be used as an efficient means of distinguishing between $\tilde{\*A}\leftarrow \lossy(1^n,1^m,1^\ell,q,D_{\mZ_q^m,B_L})$ and ${\*A}\leftarrow \GenTrap(1^n,1^m,q)$, since given $\tilde{\*A}$ or $\*A$ estimating the prover's acceptance probability in a generation round can be done efficiently.  However, by Theorem~\ref{thm:trapdoor} the distribution of $\*A$ is statistically close to uniform, whereas by Theorem~\ref{thm:lossy} the distribution of $\tilde{\*A}$ is computationally indistinguishable from uniform. 
\end{proof}

\begin{claim}\label{cl:equivalenceoflossyandLWErandomness}
Consider a prover who passes the generation rounds of any one of the three protocols (Protocols \ref{prot:randomness}, \ref{prot:lossyrandomness}, \ref{prot:amplifyrandomness}) with probability negligibly close to $1$. For such a prover, the probabilities of passing the test rounds of each of the three protocols is negligibly close.  
\end{claim}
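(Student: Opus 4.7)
The plan is to proceed by a short chain of hybrids that identifies the test-round acceptance probabilities of the three protocols, exploiting the hypothesis that the prover passes the generation round with probability negligibly close to $1$ (which, by \Cl{cl:perfectgenerationrounds}, holds uniformly across all three).

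First, I would observe that the test rounds of Protocol~\ref{prot:randomness} and Protocol~\ref{prot:amplifyrandomness} are literally the same experiment: in both cases the verifier samples $(\*A,t_{\*A}) \leftarrow \GenTrap$, sends $(\*A,\*u=\*As+\*e)$, receives $\*y$, recovers $\*x_0 \leftarrow \Invert(\*A,t_{\*A},\*y)$ and checks the equation. So the two acceptance probabilities are identical for any prover whatsoever, and it remains only to compare them with the test-round acceptance probability of Protocol~\ref{prot:lossyrandomness}.

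For that comparison I would introduce an intermediate hybrid $H$: run Protocol~\ref{prot:randomness} (so the matrix is a true LWE matrix $\*A\leftarrow\GenTrap$), but let the verifier recover $\*x_0$ by invoking the quantum preimage extraction protocol (\Prot{prot:extraction}) on the prover's register, rather than by using the trapdoor. I claim that $H$ has the same acceptance probability as Protocol~\ref{prot:randomness} up to a negligible term. Indeed, using the prover characterization of Section~\ref{sec:provercharacterization} and the assumption that the prover passes the generation round with probability negligibly close to $1$, after reporting $\*y$ the prover's preimage register is within negligible trace distance of a uniform superposition over valid preimages $(b,\*x_b)$; but for a matrix drawn from $\GenTrap$, Theorem~\ref{thm:trapdoor} guarantees that $\Invert$ succeeds with overwhelming probability, which means that with overwhelming probability the valid preimage for each $b$ is unique (namely $\*x_0$ and $\*x_0-s$, where $\*x_0 = \Invert(\*A,t_{\*A},\*y)$). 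Consequently the extraction protocol outputs this same $\*x_0$ with overwhelming probability, and the subsequent equation check is identical.

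Second, I would argue that $H$ has the same acceptance probability as Protocol~\ref{prot:lossyrandomness} up to a negligible term. The two experiments are identical except that in $H$ the matrix is $\*A\leftarrow\GenTrap$ (statistically close to uniform by Theorem~\ref{thm:trapdoor}), whereas in Protocol~\ref{prot:lossyrandomness} it is $\tilde{\*A}\leftarrow\lossy$. Crucially, the entire hybrid $H$ (including the quantum preimage extraction and the equation check) can be implemented by a QPT algorithm that knows $s$ and has black-box access to the prover; the same holds for Protocol~\ref{prot:lossyrandomness}. Thus if the two acceptance probabilities differed non-negligibly we would obtain a QPT distinguisher between $\lossy$ matrices and uniform matrices, contradicting Theorem~\ref{thm:lossy} under the $\qlwe$ assumption. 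Combining the two hybrids yields the claim.

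The step I expect to be the main subtlety is the first hybrid: one must argue carefully that when the generation-round success is $1-\negl(\lambda)$ the prover's state is in genuine superposition only over the two preimages $\*x_0$ and $\*x_0 - s$ (rather than any spurious additional preimages), so that extraction and inversion agree. This uses both the uniqueness guarantee of $\Invert$ and the fact that the characterization in Section~\ref{sec:provercharacterization} lets one assume without loss of generality that the prover's second-round unitary commutes with the standard-basis measurement on the first register, so the ``semi-honest'' preimage state considered in \Prot{prot:extraction} is the one actually handed to the extraction subroutine.
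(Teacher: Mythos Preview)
Your proof is correct and follows essentially the same approach as the paper: replace the trapdoor-based recovery of $\*x_0$ in Protocols~\ref{prot:randomness} and~\ref{prot:amplifyrandomness} by the quantum extraction protocol (valid because generation-round success $1-\negl(\lambda)$ forces the preimage register onto the at most two valid preimages for a $\GenTrap$ matrix), and then invoke indistinguishability of $\lossy$ versus uniform matrices since neither side needs the trapdoor. One small wording point: in your first hybrid the post-$\*y$ state need not be a \emph{uniform} superposition over valid preimages, only a state supported on them, but this is all your argument actually uses.
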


\begin{proof}
From Claim \ref{cl:perfectgenerationrounds}, such a prover must pass the generation rounds of all three protocols with all but negligible probability. 
If we can show that computing the probability of passing the test rounds of each of the three protocols does not require a trapdoor (and is therefore efficient), then the proof is complete, as otherwise, similarly to the proof of Claim~\ref{cl:perfectgenerationrounds} differing probabilities in the test rounds would serve as a means of efficiently distinguishing between lossy $\tilde{\*A}$ and uniform $\*A$. 

Computing the probability of passing the test round of Protocol \ref{prot:lossyrandomness} does not require the trapdoor, due to the quantum verifier extraction protocol (Protocol \ref{prot:extraction}). We can make the same claim for Protocols~\ref{prot:randomness} and~\ref{prot:amplifyrandomness} by replacing the trapdoor usage in step 3(b) with the quantum verifier extraction protocol. This change will not affect the prover's probability of succeeding in the test round because for provers who pass the generation round with probability negligibly close to $1$ this replacement does not change the probability of acceptance in the test round except by a negligible amount.
\end{proof}

We can now conclude by stating the completeness and soundness properties of Protocol~\ref{prot:amplifyrandomness}. First we note that the same completeness statement as for Protocol~\ref{prot:lossyrandomness} in Lemma~\ref{lem:q-compl} remains valid for Protocol~\ref{prot:amplifyrandomness}, with the same parameters; moreover, the behavior of the honest prover is exactly the same. For soundness, using the preceding claims we can show the following by reduction to Theorem~\ref{thm:lossyhighentropy}.

\begin{thm}\label{thm:43-randomness}
Consider a quantum polynomial-time prover $\Prov$ in Protocol~\ref{prot:amplifyrandomness} who passes the generation round with probability $p_G$ and the test round with probability $\frac{1}{2} + p_T$, for some $0< p_G\leq 1$ and $0\leq p_T\leq \frac{1}{2}$ such that $\delta = \frac{1}{2}-p_T+\sqrt{1-p_G}\leq \frac{1}{2}$, and make the $\qlwe$ assumption. 
Let $(\tilde{A},U,Y,C,B,X)$ be a random transcript for an execution of the protocol. Then 
\[ H_\infty^\varepsilon(BX|\tilde{A}UY, C=G) \,\geq\, n-\ell\log q -O(1)\;,\]
where $\eps = 5\delta^{1/4}+\negl(\lambda)$. 
\end{thm}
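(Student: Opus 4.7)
The strategy is to reduce Theorem~\ref{thm:43-randomness} to Theorem~\ref{thm:lossyhighentropy} by constructing, from the given prover $\Prov$ of Protocol~\ref{prot:amplifyrandomness}, an auxiliary QPT prover $\Prov'$ that passes the generation round with probability $1-\negl(\lambda)$. Once such a $\Prov'$ is in hand, Claim~\ref{cl:equivalenceoflossyandLWErandomness} lets me transfer its test-round success from Protocol~\ref{prot:amplifyrandomness} to Protocol~\ref{prot:lossyrandomness}, Theorem~\ref{thm:lossyhighentropy} yields a smooth min-entropy bound for $\Prov'$ in Protocol~\ref{prot:lossyrandomness}, and because the generation rounds of Protocols~\ref{prot:lossyrandomness} and~\ref{prot:amplifyrandomness} are literally identical, this bound also holds for $\Prov'$ in Protocol~\ref{prot:amplifyrandomness}; a final total-variation correction returns it to $\Prov$.

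I would define $\Prov'$ by wrapping $\Prov$ in coherent projection and classical rejection sampling. Upon receiving $(\tilde{\*A},\*u)$, $\Prov'$ runs $\Prov$ on fresh coins, records $\*y$, and measures the projector $\Pi_{\tilde{\*A},\*u,\*y}$ onto $\mathrm{span}\{\ket{b}\ket{\*x}:\|\*y-\tilde{\*A}\*x-b\,\*u\|\leq B_P\sqrt{m}\}$, which is efficient because it coincides with the verifier's generation-round check. If ``valid,'' $\Prov'$ commits to that $\*y$ and continues as $\Prov$; otherwise it restarts with fresh coins, up to $O(\lambda)$ times. Since the hypothesis $\delta\leq\tfrac{1}{2}$ forces $p_G\geq\tfrac{3}{4}$, rejection sampling succeeds with probability $1-\negl(\lambda)$, so $\Prov'$ passes the generation round with probability $1-\negl(\lambda)$, as required for Claims~\ref{cl:perfectgenerationrounds} and~\ref{cl:equivalenceoflossyandLWErandomness} to apply. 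Conditioned on $(\tilde{\*A},\*u,\*y)$, the prover's step-2 workspace under $\Prov'$ is the valid-projected-and-renormalized version of the one under $\Prov$; by gentle measurement together with Jensen's inequality applied to $\sqrt{\cdot}$, the averaged trace distance is at most $\sqrt{1-p_G}$. Hence $\Prov'$ passes the test round of Protocol~\ref{prot:amplifyrandomness} with probability at least $\tfrac{1}{2}+p_T-\sqrt{1-p_G}=1-\delta$, and Claim~\ref{cl:equivalenceoflossyandLWErandomness} gives the same bound up to $\negl(\lambda)$ in Protocol~\ref{prot:lossyrandomness}.

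Applying Theorem~\ref{thm:lossyhighentropy} to $\Prov'$ in Protocol~\ref{prot:lossyrandomness} with $p_T'=\tfrac{1}{2}-\delta-\negl(\lambda)$ yields smooth min-entropy $n-\ell\log q-O(1)$ with parameter $\varepsilon'=2(1-4(p_T')^2)^{1/4}+\negl(\lambda)$. The identity $1-4(\tfrac{1}{2}-\delta)^2=4\delta(1-\delta)$ bounds this by $2\sqrt{2}\delta^{1/4}+\negl(\lambda)\leq 3\delta^{1/4}+\negl(\lambda)$, and the identical generation round of Protocol~\ref{prot:amplifyrandomness} inherits the same bound. Finally, the joint distributions of $(\tilde{A},U,Y,B,X)$ under $\Prov$ and under $\Prov'$ coincide conditional on $(B,X)$ being a valid preimage of $\*y$, so they differ in total variation by at most $1-p_G+\negl(\lambda)\leq \delta^2+\negl(\lambda)\leq \delta^{1/4}+\negl(\lambda)$ (using $\delta\leq 1$). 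The triangle inequality for smooth min-entropy then delivers $\varepsilon\leq \varepsilon'+\delta^{1/4}+\negl(\lambda)\leq 5\delta^{1/4}+\negl(\lambda)$, as required.

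The main technical care is in the handling of $\Prov'$: one must verify that conditioning on the ``valid preimage'' event identifies the joint distributions of $\Prov$ and $\Prov'$ (even though rejection sampling reweights the marginal over $Y$), and that the wrapper is genuinely QPT---both of which hinge on the constant lower bound $p_G\geq\tfrac{3}{4}$ implied by $\delta\leq\tfrac{1}{2}$. A secondary point is that Claim~\ref{cl:equivalenceoflossyandLWErandomness} is stated only for provers passing the generation round with probability negligibly close to $1$, which is precisely why the projection-plus-retry construction is needed rather than a single ``single-trial'' projection that would only succeed with probability $p_G$.
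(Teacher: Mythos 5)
Your overall reduction strategy — construct a prover $\Prov'$ passing the generation round with probability $1-\negl(\lambda)$, invoke Claim~\ref{cl:equivalenceoflossyandLWErandomness} to transfer its test-round success to Protocol~\ref{prot:lossyrandomness}, apply Theorem~\ref{thm:lossyhighentropy}, and transfer back to $\Prov$ via a total-variation correction — is exactly the paper's route. But your construction of $\Prov'$ by rejection sampling has a genuine gap.

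The issue is the step ``since $\delta\leq\tfrac12$ forces $p_G\geq\tfrac34$, rejection sampling succeeds with probability $1-\negl(\lambda)$.'' The quantity $p_G$ is the \emph{average} over the verifier's message $(\tilde{\*A},\*u)$ of the per-input acceptance probability $p_{\tilde{\*A},\*u}$, and the retries of your wrapper are fresh independent trials conditioned on a \emph{fixed} $(\tilde{\*A},\*u)$ (the verifier sends this message once). So after $T=O(\lambda)$ retries, $\Prov'$ produces a projected state with probability $1-(1-p_{\tilde{\*A},\*u})^T$, which is $1-\negl$ only when $p_{\tilde{\*A},\*u}$ is at least inverse-polynomial. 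An average bound $p_G\geq\tfrac34$ does not give this: for instance, $p_{\tilde{\*A},\*u}=1$ on three quarters of inputs and $p_{\tilde{\*A},\*u}=0$ on the remaining quarter has $p_G=\tfrac34$, yet rejection sampling fails with constant probability. Thus $\Prov'$ as you define it passes the generation round with probability only about $p_G$, not $1-\negl(\lambda)$. Since Claims~\ref{cl:perfectgenerationrounds} and~\ref{cl:equivalenceoflossyandLWErandomness} are stated only for provers that pass the generation round with probability negligibly close to $1$, the rest of your chain of inferences does not go through. The subsidiary claim that ``the joint distributions of $\Prov$ and $\Prov'$ coincide conditional on $(B,X)$ being a valid preimage'' inherits the same difficulty: it is true only on the branch where rejection sampling succeeds, and that branch has non-negligible complement.

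The paper avoids this by invoking \cite[Lemma 7.2]{oneprover} rather than rejection sampling; the construction there works for \emph{every} input. The standard way to make this go through is: coherently test validity and measure the validity bit; on success proceed as $\Prov$, and on failure \emph{do not retry $\Prov$ but fall back to the honest strategy} — prepare the uniform superposition over $(b,\*x)$, evaluate $f_{\tilde{\*A},\*u}$ in superposition, and measure $\*y$. This produces a valid preimage with probability $1-\negl(\lambda)$ for all $(\tilde{\*A},\*u)$, independent of $p_{\tilde{\*A},\*u}$. With this construction the joint state of $\Prov'$ after the first round is within trace distance $\sqrt{1-p_G}+\negl(\lambda)$ of that of $\Prov$ (gentle measurement on the valid branch plus a $(1-p_G)$ contribution from the fallback branch), which is the bound the paper uses. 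Note your $\eps$-bookkeeping would then change: the TV correction is $\sqrt{1-p_G}\le\delta$ rather than $1-p_G\le\delta^2$, but this still gives $\eps\le 2\sqrt{2}\,\delta^{1/4}+\delta^{1/4}+\negl(\lambda)\le 5\delta^{1/4}+\negl(\lambda)$, matching the statement.
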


\begin{proof}
First we observe that given a prover $\Prov$ that passes the generation (resp.\ test) round in Protocol~\ref{prot:amplifyrandomness} with probability $p_G$ (resp.\ $p_T$), it is straightforward to construct a prover $\Prov'$ that succeeds in the generation round with probability negligibly close to $1$, and the test round with probability $\frac{1}{2}+p'_T$ for some  $p'_T \geq p_T-\sqrt{1-p_G}-\negl(\lambda)$. This follows from~\cite[Lemma 7.2]{oneprover} and only uses that  checking acceptance in the generation round is efficient. Moreover, the entire quantum state of $\Prov$ and $\Prov'$ after having reported $\*y$ in the first round are within trace distance $\sqrt{1-p_G}+\negl(\lambda)$, and the provers' actions are identical from there on. 

Applying Claim \ref{cl:equivalenceoflossyandLWErandomness}, we deduce that $\Prov'$ passes the test round of Protocol~\ref{prot:lossyrandomness} with probability negligibly close to $\frac{1}{2}+p'_T$. This allows us to apply Theorem~\ref{thm:lossyhighentropy} and conclude that, when $\Prov'$ is used in an execution of Protocol~\ref{prot:lossyrandomness}, the transcript $(\tilde{A},U,Y,C,B,X)$ of the protocol satisfies~\eqref{eq:min-bound}. Using that the distribution of outcomes generated by $\Prov'$ and $\Prov$ are within statistical distance $\sqrt{1-p_G}+\negl(\lambda)$, the theorem follows. \end{proof}

\bibliographystyle{alpha}
\bibliography{qpip}

\end{document}